\documentclass[sigconf,screen]{acmart}
\AtBeginDocument{%
  }

\setcopyright{cc}
\setcctype{by-nc-nd}
\acmDOI{10.1145/3832783.3837496}
\acmYear{2026}
\copyrightyear{2026}
\acmISBN{979-8-4007-2882-2/2026/10}
\acmConference[ASE '26]{Proceedings of the 41st IEEE/ACM International Conference on Automated Software Engineering}{October 12--16, 2026}{Munich, Germany}
\acmBooktitle{Proceedings of the 41st IEEE/ACM International Conference on Automated Software Engineering (ASE '26), October 12--16, 2026, Munich, Germany}
\acmSubmissionID{ase26main-p1456-p}
\usepackage{algorithm}
\usepackage{algorithmic}
\usepackage{proof}
\usepackage{listings}
\usepackage{subcaption}
\usepackage{mathpartir}
\usepackage{makecell}
\usepackage{multirow}
\usepackage{booktabs}
\usepackage{booktabs}
\usepackage{longtable}
\usepackage{wrapfig}

\newcommand{\ininso}{o_{\ell^\prime}^+}
\newcommand{\indelo}{o_{\ell^\prime}^-}

\newcommand{\pts}{\mathcal{P}}
\newcommand{\cg}{\mathcal{E}}
\newcommand{\OUT}{\texttt{OUT}}
\newcommand{\IN}{\texttt{IN}}
\newcommand{\KILL}{\texttt{KILL}}
\newcommand{\GEN}{\texttt{GEN}}
\newcommand{\avgfull}{9.60}
\newcommand{\avgreset}{5.84}
\newcommand{\avgbuildcg}{49.89}
\newcommand{\avgsilva}{15.8\%}
\newcommand{\avgscc}{3.7\%}
\newcommand{\avgnodes}{51.9\%}

\usepackage{xcolor}
\usepackage{enumitem}
\newcommand{\red}[1]{#1}

\definecolor{githubred}{RGB}{255,200,200}
\definecolor{githubgreen}{RGB}{200,255,200}
\newcommand{\highlight}[2]{\colorbox{#2}{#1}}
\graphicspath{{./}{./images/}}
\setlist{noitemsep, topsep=2pt}
\usepackage{etoolbox}

\begin{document}
\title{IncSFS: Incremental Full-Sparse Flow-Sensitive Pointer Analysis for C/C++}

\author{Kunlin Liu}
\authornote{Also with the affiliation: State Key Laboratory of Complex \& Critical Software Environment, National University of Defense Technology, Changsha, China.}
\orcid{0009-0003-4384-430X}
\affiliation{%
	\department{College of Computer Science and Technology,}
	\institution{National University of Defense Technology}
	\city{Changsha}
	\country{China}
}
\email{klliu18@nudt.edu.cn}
\author{Zhenbang Chen}
\authornotemark[1]
\authornote{Zhenbang Chen is the corresponding author.}
\orcid{0000-0002-4066-7892}
\affiliation{%
	\department{College of Computer Science and Technology,}
	\institution{National University of Defense Technology}
	\city{Changsha}
	\country{China}
}
\email{zbchen@nudt.edu.cn}
\author{Piyi Zu}
\authornotemark[1]
\orcid{0009-0001-6680-3602}
\affiliation{%
	\department{College of Computer Science and Technology,}
	\institution{National University of Defense Technology}
	\city{Changsha}
	\country{China}
}
\email{piyizu@nudt.edu.cn}
\author{Yide Du}
\authornotemark[1]
\orcid{0000-0003-0372-7101}
\affiliation{%
	\department{College of Computer Science and Technology,}
	\institution{National University of Defense Technology}
	\city{Changsha}
	\country{China}
}
\email{dyd1024@nudt.edu.cn}
\author{Ji Wang}
\authornotemark[1]
\orcid{0000-0003-0637-8744}
\affiliation{%
	\department{College of Computer Science and Technology,}
	\institution{National University of Defense Technology}
	\city{Changsha}
	\country{China}
}
\email{wj@nudt.edu.cn}

\begin{abstract}
	Pointer analysis is the fundamental technique for compiler optimization and program analysis. Flow sensitive pointer analysis provides high precision but hard to scale to large-size projects. Tailored for rapid iteration scenarios where software evolves continuously, we introduce IncSFS, the first incremental full-sparse flow-sensitive pointer analysis algorithm for C/C++ programs. The algorithm contains two main steps. It first transforms the value-flow graph of a program into a constraint graph, on which a strongly-connected component detection is performed to ensure precision. It then propagates increases and decreases in points-to set in an interleaving manner to support both code deletion and insertion within a single analysis pass. \red{IncSFS is guaranteed to terminate and compute the least fixed point when point-to relation during the analysis is object-acyclic.}
	Experimental results on six large-scale real-world projects show that IncSFS is both precise and efficient, achieving average speedups of \avgfull x over full flow-sensitive pointer analysis and of \avgreset x over the traditional reset-recompute incremental approach. Furthermore, IncSFS improves efficiency by \avgsilva\ compared with state-of-the-art incremental pointer analysis algorithms that also propagate changes in points-to sets.
\end{abstract}

\begin{CCSXML}
	<ccs2012>
	<concept>
	<concept_id>10011007.10010940.10010992.10010998.10011000</concept_id>
	<concept_desc>Software and its engineering~Automated static analysis</concept_desc>
	<concept_significance>500</concept_significance>
	</concept>
	</ccs2012>
\end{CCSXML}

\ccsdesc[500]{Software and its engineering~Automated static analysis}
\keywords{Full-sparse Flow-sensitive Pointer Analysis, Incremental Analysis, Constraint Graph}

\maketitle

\section{Introduction}
Pointer analysis statically computes the memory address which pointer variables point to at runtime. Its precision is crucial to compiler optimization and program analysis.  One of precision improvement dimension is flow-sensitive. It provides high precision  and serves as a foundational technique in a wide array of upstream applications, ranging from compiler optimizations~\cite{966497,10.1145/277650.277670,10.1145/1250734.1250766,10.1145/268946.268957,10.1145/781498.781502} to complex program analyses such as typestate analysis~\cite{10.1145/1348250.1348255}, multi-threaded program analysis~\cite{10.1145/379539.379553,10.1145/2854038.2854043}, bug detection~\cite{1760267.1760284,3241189.3241268}, and program slicing~\cite{10.1145/1250734.1250748,li_et_al:LIPIcs.ECOOP.2016.15}. 

Compared with flow-insensitive which treats statements as unordered, flow-sensitive pointer analysis follows the program’s execution order in control flow graph (CFG).  As program executes, the memory addresses pointed to by a pointer variable may change over time and differs at different statements. To capture this dynamic behavior, flow-sensitive pointer analysis propagates points-to information statement by statement and maintains a separate points-to information for each pointer variable at every program statement. It leverages \emph{strong update} ~\cite{10.1145/1925844.1926389,10.1007/978-3-642-11957-6_14} to update points-to information with a new one if we know the points-to information of this pointer variable must be updated. Since real-world programs often comprise hundreds of thousands of pointer variables and memory objects, and the program can have hundreds of thousands of statements, flow-sensitive pointer analysis struggles to scale to large-size software \cite{10.1145/347324.348916,5764696,647166.717860,10.1007/11688839_3}. Due to its practical importance, lots of works have been proposed to improve its scalability~\cite{10.1145/1594834.1480911,5764696,9370334,10.1007/11688839_3,10.1145/1772954.1772985,10.1145/2025113.2025160,10.1145/1925844.1926389}. The state-of-the-art and most widely used  flow-sensitive pointer analysis is \textbf{full-sparse flow-sensitive pointer analysis (SFS)}~\cite{5764696}. The primary efficiency gain of SFS stems from its use of \emph{value-flow graph} (VFG) rather than CFG, where a variable’s points-to set is propagated directly from its definition sites to potential use sites, skipping irrelevant statements. Despite its significantly improved efficiency, \emph{SFS faces the same scalability challenges as traditional flow-sensitive pointer analyses and still struggles to scale to large codebases—a long-standing open problem}.

\par This scalability bottleneck is particularly critical due to the widespread adoption of Continuous Integration and Continuous Delivery (CI/CD) in modern software engineering. To meet fast-feedback requirements, incremental pointer analysis becomes crucial by updating results based on code changes rather than recomputing them from scratch. In recent years, a growing body of approaches have been proposed for incremental pointer analysis~\cite{10.1145/3293606,10.1145/3527332,10.1145/3786763,10.1145/3720436,10.1007/978-3-031-24950-1_14,10.1145/3725214}.  These approaches  can be divided into three categories: 1) The \emph{Reset–Recompute} approach~\cite{10.1145/170035.170066,841034,10.1145/3786763}, which resets the points-to results at all affected statements to a safe initial estimate ,\emph{i.e.}, an empty set and recomputes them via a full re-analysis. \red{In particular, the work by Yur \emph{et al.} was the first to apply the Reset--Recompute approach to incremental traditional flow- and context-sensitive pointer analysis, rather than to full-sparse flow-sensitive pointer analysis.} 2) The \emph{Restart–Iteration} approach~\cite{10.1145/502874.502898,ghodssi1983incremental,10.1007/978-3-031-24950-1_14}, which resets the points-to results at only the modified statements to an empty set and recomputes. 3) The \emph{Incremental Pointer Analysis} (IPA) framework~\cite{10.1145/3293606,10.1145/3527332,10.1145/3725214}, which computes and propagates only the \emph{deltas} (differences) in points-to information instead of resetting.%

\par \textbf{Limitation of existing approaches.} The first category ensures the safety of results but faces performance problem. The process of resetting and recomputing often leads to significant redundant computations. The second category enjoys efficiency but suffers from  imprecision when reusing old analysis results within strongly connected components (SCCs) of the program~\cite{10.1145/3720436}. Both the incremental flow-sensitive approaches above operate directly on the control-flow graph rather than the full-sparse \emph{value-flow graph}.  The third category achieves high efficiency by leveraging delta propagation but is currently limited to flow-insensitive settings. As a result, \emph{no prior researches on incremental analysis have been conducted on the full-sparse flow-sensitive pointer analysis.}  %

\par
In this paper, we present the \textbf{first incremental algorithm for full-sparse flow-sensitive pointer analysis}. Inspired by IPA, we aim to propagate the deltas of points-to sets from code modification as it is particularly well suited for pointer analysis and has been shown to be more efficient than reset-based methods~\cite{10.1145/3293606}.

\begin{figure}
	\centering
	\begin{minipage}{0.6\linewidth} \begin{lstlisting}
|p = x; // p  and x points to \{o1,o2\}|
|\highlight{- p = y; // y points to \{o2,o3\}}{githubred}|
|q = p; //
		\end{lstlisting}
		\caption{IPA performs inefficiently in this deletion scenario}
		\label{fig:intro-example}
	\end{minipage}
\end{figure}

\par \textbf{Challenges.} IPA is inherently tailored to flow-insensitive analysis, where constraint edges encode propagation paths between variables. When one propagation path is removed, IPA inspects the remaining propagation path from other variables to validate whether the potentially removed memory objects should be removed. For an example, in Figure \ref{fig:intro-example}, q originally points-to \{o1,o2,o3\} which is propagated from x and y in a flow-insensitive way. When Line 2 is deleted, the propagation path from y to q is deleted and IPA find another propagation path from x still contain o2, then only $o3$ is removed from points-to set of q. In IPA, all propagation paths require to be found, otherwise resulting in incorrectness. However, flow-sensitive analysis is fundamentally more complex. To find propagation paths between variables in flow-sensitive pointer analysis, IPA further requires to consider the  control flow reachability between variables and  ensure points-to set of variables is not strongly updated in the propagation process. In this example, x cannot flow to q since p is strongly  updated at Line 2.  Since the strong update could happen for any variable at any statement, different variables may have different propagation path. How to encode propagation paths with control flow information for all variables is the \textbf{key challenge} for extending IPA to flow-sensitivity. 

Another \textbf{key challenge} is how to characterize the evolution of the SFS-based value flow graph in response to code changes. \red{As illustrated in Fig.\ref{fig:intro-example}, when the second line is deleted, the value flow edge from \texttt{p = y} to \texttt{q = p} is deleted and \texttt{p = x} to \texttt{q = p} is added.} The points-to set of \texttt{q} changes from $\{o2, o3\}$ to $\{o1, o2\}$, resulting in the removal of $o3$ and the addition of $o1$. However, traditional IPA typically assumes that code deletion leads solely to set reduction, thus lacking an efficient and unified algorithm to handle such complex, non-monotonic updates. Existing approaches struggle to identify the correlation between additions and deletions, thus resorting to  a "delete-then-add" engineering strategy~\cite{10.1145/3293606,10.1145/3527332,10.1145/3725214}.

\textbf{Our solution.} To encode control flow information and align with IPA, our insight is to instantiate a unique renamed variable for the variable at the redefined statement to capture its local points-to set.  We then construct a constraint graph used in flow-insensitive analysis to propagates points-to set among these renamed variables. Consequently, we perform IPA on the constraint graph in a flow-insensitive way.
To address the second challenge, we compute increases and decreases in points-to sets in an interleaved manner in response to value flow graph changes. To guarantee termination and convergence to the least fixed point, we ensure that increases and decreases are mutually exclusive and require object-acyclicity throughout the analysis: the points-to relation remains acyclic. Another key scenario is code \emph{replacement}. Although Git represents replacements as deletions followed by insertions, the statements are often semantically correlated; our method exploits this correlation to avoid redundant propagation in standard IPA.

Based on the above approaches, we have designed our incremental full-sparse flow-sensitive pointer analysis, \emph{i.e.}, IncSFS, and implemented it in the SVF framework~\cite{10.1145/2892208.2892235}. %
We have evaluated our approach in large scale real-world programs. We compare our approach with both full analysis and traditional reset-recompute approach, and it  achieves \avgfull x and \avgreset x  speedups respectively. We also compare our interleaved propagation with IPA's separate propagation and it achieves \avgsilva\ efficient improvement.
\par
In summary, we claim the following contributions:
\begin{itemize}[leftmargin=12pt,itemsep=2pt,topsep=2pt]
	\item We propose the first incremental algorithm for full-sparse flow-sensitive pointer analysis that efficiently updates the points-to information  in CI/CD scenarios.
	
	\item We \red{rename variables to enable safe propagation of points-to deltas in incremental settings by transforming the VFG into a variable-level constraint graph}.
	
	\item We support both code insertions and deletions by jointly propagating positive and negative deltas; \red{under the object-acyclicity condition, the algorithm terminates and reaches the least fixed point despite the non-monotonic effects of code modification.}
	
	\item We evaluate our approach on large-scale real-world programs and demonstrate that it achieves \avgfull x and \avgreset x  speedups than full analysis and reset-recompute approach respectively.
\end{itemize}
\begin{figure}[t]
	\centering
	
	\begin{minipage}[t]{0.48\textwidth} 
		\centering
		\footnotesize
		\captionof{table}{Program Syntax} %
		\label{tab:program-syntax}
		\vspace{0pt} 
		\begin{tabular}{ll}
			\toprule
			Type & Statement \\ 
			\midrule
			Alloc & $p = alloc_o$ \\
			Copy  & $p = q$ \\
			Field & $p = \&q.f$ \\
			Phi   & $p = \phi(p_1,\dots,p_n)$ \\
			Store & $*p = q$ \\
			Load  & $p = *q$ \\
			\bottomrule
		\end{tabular}
	\end{minipage}%
	\hspace{0.5cm} %
	\begin{minipage}{0.45\textwidth}
		\centering
		\footnotesize
		\captionof{table}{ Abstract Domain}
		\label{tab:abstract-domain}
		\begin{tabular}{c c}
			\toprule
			Notation & Definition \\
			\midrule
			$\mathcal{L}$ & Labels \\
			$\mathcal{O}$ & Address Taken Variables\\
			$\mathcal{T}$ & Top Level Variables\\
			$\mathcal{V}:=\mathcal{O}\cup\mathcal{T}$ & Variables \\
			$\mathcal{P}:=\mathcal{V}\rightarrow2^{\mathcal{O}}$ & Points-to Set of Variables\\
			$\IN:=\mathcal{L}\rightarrow\mathcal{V}\rightarrow2^{\mathcal{O}}$ & Points-to set at each label\\
			$\OUT:=\mathcal{L}\rightarrow\mathcal{V}\rightarrow2^{\mathcal{O}}$ & Points-to set at each label\\
			\bottomrule
		\end{tabular}
	\end{minipage}
	
\end{figure}

\section{PRELIMINARIES And Motivation}

\subsection{Program Syntax}

\begin{figure*}[t]
	\centering

	\begin{minipage}[t]{0.38\textwidth}
		\vspace{-15mm}
		\centering

		\begin{minipage}[t]{0.40\linewidth}
			\raggedright
			\begin{lstlisting}[basicstyle=\ttfamily\footnotesize, xleftmargin=0pt, escapeinside={||}]
*x = a;
|\highlight{-*x = b;}{githubred} \label{line:strongupdate}|
c = *x;
*q = b;
*p = d;
while(...){
	*m = a;
	*n = c;
}
			\end{lstlisting}
		\end{minipage}%
		\hspace{-10mm}
		\begin{minipage}[t]{0.40\linewidth}
			\centering
			\vspace{0pt}
			\footnotesize
			\setlength{\tabcolsep}{1.5pt} %
			\begin{tabular}{cl}
				\toprule
				Var & Pts Set \\
				\midrule
				x & $\{o^x\}$\\
				p & $\{o^p\}$ \\
				q & $\{o^q\}$ \\
				m & $\{o^q\}$ \\
				n & $\{o^p,o^q\}$ \\
				a & $\{\red{o^a}\}$ \\
				b & $\{\red{o^a},\red{o^b}\}$ \\
				d & $\{\red{o^a}\}$ \\
				\bottomrule
			\end{tabular}
		\end{minipage}
		\vspace{2pt}
		\captionof{figure}{Motivating example.}
		\label{fig:example}
	\end{minipage}
	\begin{minipage}[t]{0.60\textwidth}
		\centering
		\begin{minipage}[c]{0.50\textwidth}
			\centering

			\includegraphics[width=\linewidth]{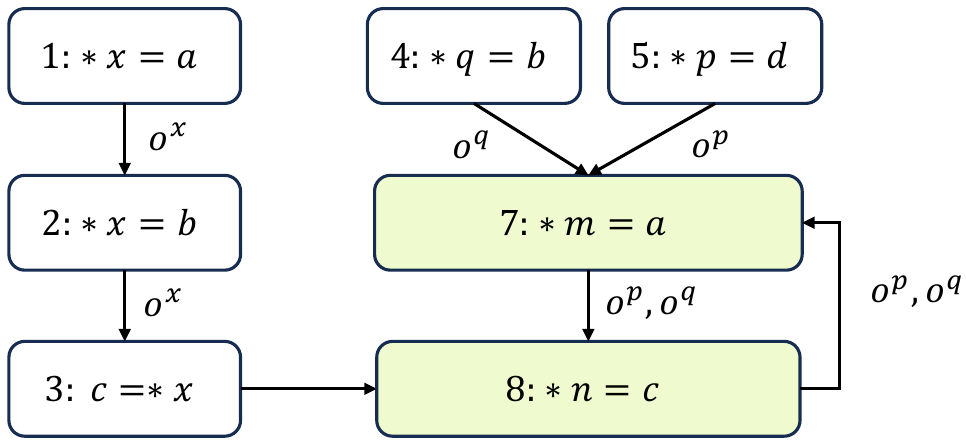}

		\end{minipage}%
		\hspace{-8mm}
		\begin{minipage}[c]{0.50\textwidth}
			\centering
			\footnotesize
			\renewcommand{\arraystretch}{1.2}

			\begin{tabular}{ccll}
				\toprule
				Node & Variable & $\IN$ & $\OUT$ \\
				\midrule
				1 & $o^x$ & $\varnothing$ & $\{\red{o^a}\}$ \\
				2 & $o^x$ & $\{\red{o^a}\}$     & $\{\red{o^a},\red{o^b}\}$ \\
				3 & $o^x$ & $\{\red{o^a},\red{o^b}\}$ & $\varnothing$ \\
				4 & $o^q$ & $\varnothing$ & $\{\red{o^a},\red{o^b}\}$ \\
				5 & $o^p$ & $\varnothing$ & $\{\red{o^a}\}$ \\
				\multirow{2}{*}{7} & $o^p$ & $\{\red{o^a},\red{o^b}\}$ & $\{\red{o^a},\red{o^b}\}$ \\
				& $o^q$ & $\{\red{o^a},\red{o^b}\}$ & $\{\red{o^a}\}$ \\
				\multirow{2}{*}{8} & $o^p$ & $\{\red{o^a},\red{o^b}\}$ & $\{\red{o^a},\red{o^b}\}$ \\
				& $o^q$ & $\{\red{o^a}\}$ & $\{\red{o^a},\red{o^b}\}$ \\
				\bottomrule
			\end{tabular}
		\end{minipage}
		\caption{the VFG (left) and the corresponding points-to sets during the sparse flow-sensitive analysis (right).}
		\label{fig:motivating_example}
	\end{minipage}
\end{figure*}

\par Table~\ref{tab:program-syntax} defines our program syntax with six statement forms: \textbf{Alloc}, \textbf{Copy}, \textbf{Field}, \textbf{Phi}, \textbf{Store}, and \textbf{Load}. The input programs are in a static single assignment (SSA) form, where each variable is defined only once. %
The input program is in a static single assignment (SSA) form in which variable  is defined only once.

\subsection{Full-Sparse Flow-Sensitive Pointer Analysis}

\subsubsection{Abstract Domains.} The abstract domains are illustrated in Table \ref{tab:abstract-domain}. We  assign a label $\ell\in \mathcal{L}$ for each statement, representing the location of statement within the program. All Variables are partitioned as $\mathcal{V}=\mathcal{T}\cup\mathcal{O}$, where $\mathcal{T}$ contains contains top-level SSA variables and $\mathcal{O}$ contains address-taken variables, such as abstract heap objects accessed through loads and stores. For example, as shown in Figure \ref{fig:example}, we omit the definitions of variables \texttt{a}, \texttt{b}, \texttt{d}, \texttt{p}, \texttt{q}, \texttt{m}, \texttt{n}, \texttt{x} and assume the points-to sets of these variables are shown in  Figure \ref{fig:example} on the right. Variables \texttt{x}, \texttt{a} , \texttt{b}, \texttt{c}, \texttt{d}, \texttt{m}, \texttt{n}, \texttt{p}, \texttt{q} are top level variables while $o^p,o^q,o^x,\red{o^a},\red{o^b}$ are address-taken variables. $\mathcal{P}$ is a mapping which stores the points-to set for each top-level variable. Flow-sensitive pointer analysis maintains  $\IN$ and $\OUT$ at each statement which maps every  variable to its points-to set before and after that statement.

\subsubsection{Full-Sparse Flow-Sensitive Pointer Analysis}

Traditional flow-sensitive pointer analysis blindly computes and propagates points-to result of all variable statement by statement along the CFG regardless of it is used or not in the statement, resulting in inefficiency. Full-sparse flow-sensitive pointer analysis (SFS)~\cite{5764696} improve this by eliminating unnecessary points-to information propagation of a variable if this variable is not used at this statement. For example, since $o^x$ is not used at Line 4, SFS does not maintain points-to result of $o^x$ in $\IN_4$ and $\OUT_4$ and does not propagate the points-to set of $o^x$ to Line 4. To realize it,  SFS leverages a value-flow graph (VFG) (Definition \ref{def:vfg}) to propagate points-to sets directly from definition sites to use sites~\cite{10.1145/1594834.1480911, 5764696, 10.1145/2345156.2254092}.

\begin{definition}[\textbf{Value-Flow Graph}]
	\label{def:vfg} Value-flow graph (VFG) is defined as a labeled directed graph  $G= (N,E)$, where $N$ is a set of nodes, each of which represents a program statement and $E\subseteq N\times N\times\mathcal{V}$ is a set of labeled directed edges.  Each edge  $n_1\stackrel{o}{\rightsquigarrow}n_2\in E$ indicates that the variable $o$ defined at statement $n_1$ may be used at statement $n_2$.
\end{definition}
\par
\textbf{Construction of the VFG.}  SFS relies on  a flow-insensitive pre-analysis to identify address-taken variables potentially accessed by load and store statements. It then treats stores as potential definitions and loads as potential uses and connects their def-use relations to construct the VFG. As shown in Figure \ref{fig:example}, $\pts(m)$ here is over-approximated as $\{o^p,o^q\}$ under the pre-analysis, while the other points-to sets remain unchanged. Thus, \texttt{*m = a} may define both $o^p$ and $o^q$, and \texttt{*n = c} may use and redefine them, producing a value-flow edge from \texttt{*m = a} to \texttt{*n = c} labeled with $o^p$ and $o^q$.

\textbf{Full-sparse flow-sensitive pointer analysis algorithm.}   SFS propagates points-to result along the VFG instead of CFG following Equation~\ref{dataflowequation}, $\IN_n(o)$ is computed as the union of all $\OUT_k(o)$ from predecessor nodes $k$ that can reach $n$ via a value-flow edge, \emph{i.e.}, $k \stackrel{o}{\rightsquigarrow} n$. The $\OUT_n(o)$ set is then strongly updated by combining newly generated pointers $\GEN_n(o)$ with those from $\IN_n(o)$ that are not killed by $n$ i.e., $\IN_n(o) \setminus \KILL_n(o)$. Three behaviors are categorized based on $\pts(p)$ for store statement $\ell: *p = q$: no operation (Definition \ref{def:noop}), \emph{strong update} (Definition \ref{def:su}), and \emph{weak update} (Definition \ref{def:wu}). Taking Line 2: \texttt{*x = b} as an example, since \texttt{x} only point to $o^x$, a strong update is performed: it kills all incoming points-to set from the predecessor ($\KILL_2(o^x)=\IN_2(o^x)$) and generate a new set for $o^x$: $\OUT_2(o^x)=\pts(b)$. Since  \texttt{*n = c} where \texttt{n} points to two objects,  a weak updates is performed and $\forall _{o\in \pts(n)}\KILL_8(o)=\emptyset$.
\begin{equation}
	\label{dataflowequation}
	\begin{gathered}
		\IN_n(o) = \bigcup_{k\stackrel{o}{\rightsquigarrow} n} \OUT_k(o)\\
		\OUT_n(o)= \GEN_n(o)\cup (\IN_n(o)\setminus \KILL_n(o))
	\end{gathered}
\end{equation}

\begin{definition}[\textbf{No Operation}]
	For store statement $\ell: *p = q$, if $\vert\pts(p)\vert=0$, SFS does not handle this statement to ensure monotonicity and  $\forall_{o_\in\mathcal{O}}\OUT_\ell(o)=\emptyset$.
	\label{def:noop}
\end{definition}

\begin{definition}[\textbf{Strong Update}]
	For store statement $\ell: *p = q$, if $\vert\pts(p)\vert=1\wedge o\in\pts(p)$, SFS replaces $\OUT_\ell(o)$ with a new value,\emph{i.e.}, $\OUT_\ell(o)=\pts(q)$. And for other address-taken variables $o^\prime$ ($o\neq o^\prime$) reaching this statement, the points-to result just bypass this node,\emph{i.e.},$\OUT_\ell(o^\prime)=\IN_\ell(o^\prime)$.
	\label{def:su}
\end{definition}

\begin{definition}[\textbf{Weak Update}]
For store statement $\ell: *p = q$,	 if $\vert\pts(p)\vert>1\wedge o\in\pts(p)$, SFS merges $\OUT_\ell(o)$ with a new value ,\emph{i.e.}, $\OUT_\ell(o)=\IN_\ell(o)\cup\pts(q)$. And for other address-taken variables $o^\prime$ reaching this statement, the points-to result just bypass this node,\emph{i.e.},$\OUT_\ell(o^\prime)=\IN_\ell(o^\prime)$.
	\label{def:wu}
\end{definition}
In the presence of loops or recursive calls, the analysis iteratively updates the points-to sets until a fixed point is reached \cite{10.1145/1594834.1480911}. For example in Figure \ref{fig:motivating_example}, the nodes shaded in green form an SCC. We present a walkthrough of SFS starting with Node 8. Initially the $\IN$ and $\OUT$ at node 7 and node 8 are all $\emptyset$. We firstly focus on node 8: since $\pts(c)=\IN_3(o^x)=\OUT_2(o^x)=\{\red{o^a},\red{o^b}\}$, then $\OUT_8(o^p)=\IN_8(o^p)\cup\pts(c)=\emptyset \cup \{\red{o^a},\red{o^b}\}=\{\red{o^a},\red{o^b}\}$,and  $\OUT_8(o^q)$ is also $\{\red{o^a},\red{o^b}\}$. Then we focus on node 7: $\IN_7(o^p)=\OUT_5(o^p)\cup\OUT_8(o^p)=\{\red{o^a}\}\cup\{\red{o^a},\red{o^b}\}=\{\red{o^a},\red{o^b}\}$ and $ \IN_7(o^q)=\OUT_4(o^q)\cup\OUT_8(o^q)=\{\red{o^a},\red{o^b}\}\cup\{\red{o^a},\red{o^b}\}=\{\red{o^a},\red{o^b}\}$. Since node 7 performs a strong update for $o^q$, $\OUT_7(o^q)=\pts(a)=\{\red{o^a}\}$ while $\OUT_7(o^p)=\IN_7(o^p)=\{\red{o^a},\red{o^b}\}$. And the $\OUT_7$ is propagated to node 8, resulting in $\IN_8(o^p)=\{\red{o^a},\red{o^b}\}$ and $\IN_8(o^q)=\{\red{o^a}\}$. Restart analyzing node 8 again, the result is stable and reaches a fixed point.

\subsection{Incremental Pointer Analysis}
IPA is designed for flow-insensitive pointer analysis  which translates program statements into a set of constraints in Table~\ref{tab:andersen-constraints}, constructing a constraint graph (Definition \ref{def:cg}) \cite{andersen1994program}.  Flow-insensitive pointer analysis  iteratively propagates points-to sets along the constraint edges until a fixed point is reached. %

\begin{definition}[\textbf{Constraint Graph}]
	\label{def:cg}
The constraint graph is a directed graph $G = (\mathcal{N},\mathcal{E})$, where $\mathcal{N}\subseteq \mathcal{V}$ represents the set of variables  and $\mathcal{E}\subseteq \mathcal{N}\times\mathcal{N}$ represents subset constraints between point-to sets of connected variables.
\end{definition}

	\begin{table}
		\centering

		\captionof{table}{Constraints for flow-insensitive pointer analysis. The edge $p \leftarrow q \in \mathcal{E}$ means the constraint $\pts(q)\subseteq\pts(p)$. The edge $p \stackrel{f}{\leftarrow} q \in \mathcal{E}$ means the constraint $\forall o\in\pts(q),o.f\in\pts(p)$.}
		\label{tab:andersen-constraints}

		\begin{tabular}{c c c}
			\hline
			Type & Statement &  Points-To Constraint \\
			\hline
			Alloc & $p = alloc_o$  & $o \in \mathcal{P}(p)$ \\
			Copy & $p = q$ & $p \leftarrow q\in\mathcal{E}$  \\
			Phi & $p = \phi(p_1, \dots,p_n)$ & $\forall p_i : p \leftarrow p_i\in\mathcal{E}$  \\
			Load & $p = *q$ & $\forall o \in \mathcal{P}(q) : p \leftarrow o\in\mathcal{E}$ \\
			Store & $*p = q$ & $\forall o \in \mathcal{P}(p) : o\leftarrow q\in\mathcal{E}$ \\
			Field & $p = \&q.f$ & $p \xleftarrow{f} q\in\mathcal{E}$ \\
			\hline
		\end{tabular}
	\end{table}%

\textbf{Overall algorithm}.  The input to IPA is an acyclic constraint graph by  identifying SCCs and collapsing each SCC into a single representative node. %
Importantly, IPA leverage the \emph{change-locality property}~\cite{10.1145/3293606} of constraint graph: upon deleting a copy constraint edge, it suffices to check the incoming neighbors of the target node in the constraint graph to determine whether the potential removed objects should be removed and continue to propagate. %
For example, in Figure~\ref{fig:example_cg}, deleting $b\rightarrow o_2^x$ makes $\{\red{o^a},\red{o^b}\}$ candidate removals. Since the remaining predecessor $a$ still contains $\red{o^a}$, IPA preserves $\red{o^a}$ and propagates only the removal of $\red{o^b}$ from $\pts(c)$.

\subsection{Motivation}
\subsubsection{Transform the VFG to a Constraint Graph.}
\textbf{Identify the SCCs as IPA.} We aim to extend IPA's delta propagation~\cite{10.1145/3293606} to flow-sensitive pointer analysis. IPA must identify SCCs and use the predecessors of the entire SCC to filter removed objects; otherwise, imprecision may be introduced. In Figure \ref{fig:motivating_example} as an example, after deleting Line~\ref{line:strongupdate} (\texttt{ *x = b}), the points-to set of variable $c$ changes from $\{\red{o^a}, \red{o^b}\}$ to $\{\red{o^a}\}$(replaced by \texttt{a} from \texttt{b}), and the reduced set $\{\red{o^b}\}$ is propagated to its successor node 8. A local check would retain $\red{o^b}$ because it remains in $\OUT_7(o^p)$. However, since the occurrence of $\red{o^a}$ in both $\OUT_7(o^p)$ and $\OUT_8(o^p)$ is solely due to the propagation from $\pts(c)$, both $\OUT_8(o_p)$ and $\OUT_7(o^p)$ should decrease from $\{\red{o^a}, \red{o^b}\}$ to $\{\red{o^a}\}$. We cannot use $\OUT_7(o^p)$  to validate $\OUT_8(o^p)$ , as they are in the same SCC, where their values are interdependent and potentially changing together. %

\textbf{Strong updates may break SCCs presented on the VFG.}  Considering the example in Figure \ref{fig:motivating_example},  node 7 and node 8 form a single SCC.
Since node~7:$\texttt{*m = a}$ performs a strong update,  $\OUT_8(o^q)$ does not contribute to $\OUT_7(o^q)$. Therefore, the propagation path between $\OUT_7(o^q)$ and $\OUT_8(o^q)$ cannot form a SCC. If we mistakenly treat these two as belonging to the same SCC, when $\OUT_8(o^q)$ is reduced, the predecessors of $\OUT_7(o^q)$ may be incorrectly used to filter the potentially removed object, resulting in imprecision.

\begin{figure}[t]
	\centering
	\begin{minipage}[b]{0.35\linewidth}
		\includegraphics[width=\linewidth]{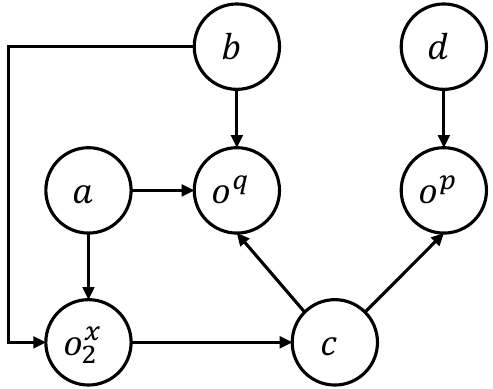}

		\caption{Example constraint graph in flow-insensitivity.}
		\label{fig:example_cg}
	\end{minipage}
\hspace{3mm}
	\begin{minipage}[b]{0.40\linewidth}
		\centering
		\includegraphics[width=\linewidth]{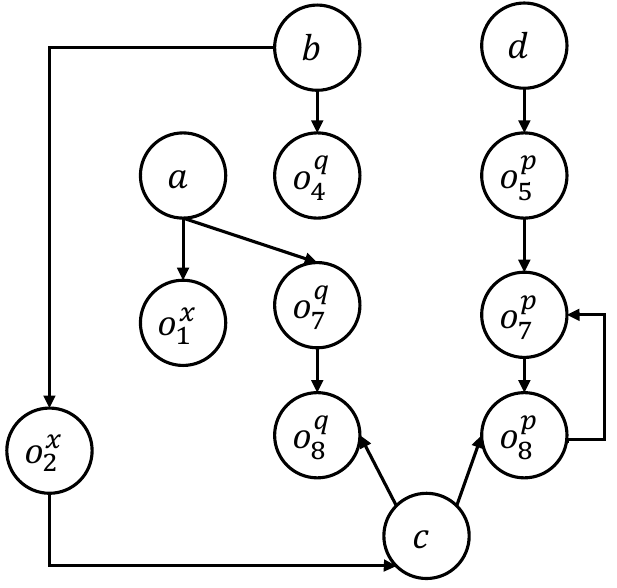}
		\caption{Constraint Graph with respect to the example VFG}
		\label{fig:constraint-graph}
	\end{minipage}
\end{figure}
 This highlights a critical issue: in flow-sensitive pointer analysis, strong updates for different address-taken variable will make these variables have a different dependency structure and the SCC decomposition varies per variable.
To efficiently reuse the \emph{change-local property} like IPA, our key insight is that SCC identification must be object-specific. As a result, we transform the statement-level value-flow graph into a variable-level constraint graph as shown in Figure \ref{fig:constraint-graph}. %
To preserve flow-sensitivity—,\emph{i.e.}, to distinguish the points-to set of a variable at different program statements— each potentially defined variable $v$ at statement $\ell$ is renamed as $v_\ell$, with $\pts(v_\ell)=\OUT_\ell(v)$. Specially, since a strong update happens at \texttt{*m = a},  $\OUT_8(o^q)$ does not contribute to the result of $\OUT_7(o^q)$ and there is no constraint edge from $o^q_8$ to $o^q_7$, resulting in no SCC for $o^q$. When the points-to set of \texttt{c} changes, the constraint graph explicitly encodes the propagation paths to $o^p$ and $o^q$ as well as the SCCs which they belongs to, enabling precise identification of which predecessor's old results can be safely reused.

\subsubsection{Propagation of Points-to Additions and Deletions.}

 Considering the example shown in Figure \ref{fig:example}, \texttt{*x = b} at Line ~\ref{line:strongupdate} performs a strong update and the points-to set of variable $c$ includes only  $b$'s points-to set $\{o^a, o^b\}$ . When Line~\ref{line:strongupdate} is removed, $\texttt{*x = a} \to \texttt{c = *x}$ becomes active and the points-to set of $c$ firstly is cleared and regains $\{o^a\}$. \red{IPA handles this incrementally by first propagating the reduced set ($\{o^a,o^b\}$) forward, and then propagating the increased set ($\{o^a\}$).} \red{This two-phase process introduces unnecessary overhead by repeatedly processing the same constraint nodes and performing SCC detection to break and reconstruct affected SCCs.}  %
 \red{For example, we extend the running example by incorporating the loop shown in Figure~\ref{fig:added-example}.} \red{This extension reuses variable $a$ and $c$ from Figure~\ref{fig:example}. The load at Line~11 induces the object-specific constraints $o^a_{11}\rightarrow e$ and $o^b_{11}\rightarrow e$. The store at Line~12, together with the loop-carried value flow to the next iteration's load, induces the return constraint $e\rightarrow o^a_{11}$. Hence, $\{o^a_{11},e\}$ forms an SCC. Under delete-then-insert propagation, the deletion phase temporarily changes $\pts(c)$ from $\{o^a,o^b\}$ to $\emptyset$ and removes $o^a_{11}\rightarrow e$, breaking the SCC and leaving only $e\rightarrow o^a_{11}$. The insertion phase then restores $o^a$, re-adds $o^a_{11}\rightarrow e$, and reconstructs the same SCC. Since SCC detection is linear in the graph size, repeatedly destroying and rebuilding such SCCs can dominate incremental-analysis time.}
 \begin{center}
 	\begin{minipage}{\columnwidth}
 		\begin{minipage}[c]{0.60\columnwidth}
 			\begin{lstlisting}[basicstyle=\ttfamily\footnotesize,numbers=left,firstnumber=10,numberstyle=\tiny,xleftmargin=1.8em]
while (...) {
	e = *c;
	*a = e;
}
 			\end{lstlisting}
 			\centering\red{\small (a) Code extension}
 		\end{minipage}\hfill
 		\begin{minipage}[c]{0.35\columnwidth}
 			\centering
 			\includegraphics[width=0.8\linewidth]{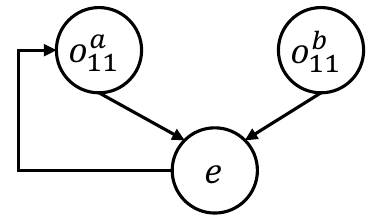}
 			
 			\red{\small (b) Local constraint graph}
 		\end{minipage}
 		\captionof{figure}{\red{Downstream extension of the running example and its relevant local constraint graph.}}
 		\label{fig:added-example}
 	\end{minipage}
 \end{center}

\red{To address these limitations, we propose a unified incremental algorithm that jointly handles both code deletions and insertions in a single pass. In the downstream-loop example, when the points-to set of \texttt{c} is potentially decreased by removing both $\red{o^a}$ and $\red{o^b}$, our algorithm uses not only the remaining edges but also the newly added edges to filter removed objects. Since the newly activated definition still contributes $\red{o^a}$,  only $\red{o^b}$ is removed and  edge $\red{o^a}_{11} \rightarrow e$ remains alive. Consequently, the SCC containing $\red{o^a}_{11}$ and $e$ is not destroyed, avoiding redundant SCC detection and reconstruction.}

\relax

\section{Incremental Full-Sparse Flow-Sensitive Pointer Analysis}

\subsection{Overall Incremental Algorithm}

\begin{algorithm}[t]
	\flushleft
	\caption{Incremental Analysis Algorithm for full-sparse flow-sensitive pointer analysis}
	\label{alg:getChangeMethod}
	\textbf{Input:} old points-to set $\mathcal{P}$, code changes $\mathcal{C}$, old VFG $G$ \\
	\textbf{Output:} flow pointer analysis result $\mathcal{P'}$ after modified\\
	\begin{algorithmic}[1]
		\STATE $G^\prime\leftarrow$ IncVFGbySILVA($G$,$\mathcal{C}$) \COMMENT{incrementally construct VFG by SILVA} \label{alg:start}
		\STATE $\mathcal{G}\leftarrow$TranVFGtoCG($G$) \COMMENT{transform VFG into constraint graph}
		\STATE $R,\mathcal{N}_T\leftarrow$SCCDetection($\mathcal{G}$)
		\COMMENT{perform SCC detection initially and merge SCCs}
		\STATE $N^+,N^-,E^+,E^-\leftarrow$ GraphDiff($G$,$G^\prime$)
		\STATE $\mathcal{E}^-,\mathcal{E}^+$=InitCGchanges($N^+,N^-,E^+,E^-$) \COMMENT{initialize deleted and inserted constraint edges}
		
		\WHILE{$\cg^-\neq \emptyset$ or $\cg^+\neq \emptyset$} \label{alg:mainstart}
		
		\FOR{$e_C:p\rightarrow q\in\cg^+$}
		\STATE P-InsCopy($e_C$,$R$) \COMMENT{handle inserted constraint edges}
		\ENDFOR
		
		\STATE$\mathcal{E}\leftarrow\mathcal{E}\cup\mathcal{E}^+\backslash \mathcal{E}^-$

		\STATE $R,\mathcal{N}_T\leftarrow$SCCDetection($\mathcal{G}$) \COMMENT{perform SCC detection and merge SCCs}
		\FOR{$e_C:p\rightarrow q\in\cg^-$}
		\STATE P-DelCopy($e_C$,$R$) \COMMENT{handle deleted constraint edges}
		\ENDFOR
		\STATE$\mathcal{E}^-,\mathcal{E}^+\leftarrow\emptyset$
		\STATE $\mathcal{P}^\prime\leftarrow $FilterAndPropagate($\mathcal{N}_T$, $\mathcal{P}$) \COMMENT{propagate and filter following topological order}
		\ENDWHILE\label{alg:mainend}
	\end{algorithmic}
\end{algorithm}

Our incremental full-sparse flow-sensitive pointer analysis algorithm proceeds in two phases. The first phase lies in Line \ref{alg:start}-\ref{alg:mainstart} in Algorithm.\ref{alg:getChangeMethod}.  The inputs of the algorithm are old flow- sensitive pointer analysis result $\pts$, old VFG $G$, and code changes $\mathcal{C}$. Firstly,  we  use the existing tool SILVA~\cite{10.1145/3725214} to incrementally compute the updated value-flow graph $G^\prime$ based on the code changes $\mathcal{C}$ and old VFG $G$.  Then we transform the old VFG $G$ into a constraint graph $\mathcal{G}$.\footnote{The detailed transformation rules are presented in \textbf{Section~\ref{sec:trans}}.} An SCC detection is performed on the constraint graph for subsequent use. Based on the detected SCCs, we collapse  each SCC into a single representative node on the constraint graph, yielding the representative information $R$ of each node, and obtain the topological order $\mathcal{N}_T$ .  Afterwards, we have implemented GraphDiff to compare the new value-flow graph $G^\prime$ with the old one $G$ and obtain the inserted nodes $N^+$, deleted nodes $N^-$, inserted value-flow edges $E^+$, deleted value-flow edges $E^-$.   \footnote{The graph changes in VFG will result in the deletion $\mathcal{E}^-$ and insertion $\mathcal{E}^+$ of constraint edges which is elaborated in \textbf{Section~\ref{sec:init}}.}

\par In the second phase which lies in Line \ref{alg:mainstart}-\ref{alg:mainend},  we firstly handle the inserted constraint edges $\mathcal{E}^+$  according to the representative information. We then perform SCC detection on the modified constraint graph and obtain the new representative information for the handling of the removed constraint edges $\mathcal{E}^-$. Following the topological order $\mathcal{N}_T$ derived from SCC detection, we compute and propagate the increases and decreases for each constraint node on the constraint graph $\mathcal{G}^\prime$. These changes in points-to set may further trigger  more copy constraint edges being deleted or inserted due to load  and store statements. We start the next iteration with these newly deleted edges $\mathcal{E}^-$ and inserted edges $\mathcal{E}^+$, and repeat this process until no further edge updates occur. \footnote{The precise mechanism for this efficient delta propagation is detailed in \textbf{Section~\ref{sec:filter}}.}

\subsection{Transform VFG to Constraint Graph}\label{sec:trans}

For each address-taken variable $o$ potentially defined at a store statement $\ell$, we create a renamed version $o_\ell$ with $\pts(o_\ell) = \OUT_\ell(o)$. After renaming, we build the constraint graph whose nodes  are either top level variables or renamed address taken variables. Inclusion based constraint edges between variables are added according to the semantic transfer function associated with each type of statements in flow-sensitive pointer analysis.

Figure \ref{fig:TransVFGToCG} illustrates the rules of transformation. For each \textbf{alloc} statement, we introduce a additional memory object $\pts(p)\leftarrow\pts(p)\cup\{o\}$ for p. For \textbf{copy} statements $p = q$ and \textbf{phi} instructions, we insert a copy constraint edge $p \leftarrow q$, enforcing the subset constraint $\pts(q) \subseteq \pts(p)$. For load and store statement $\ell$, since we only represent $\OUT_\ell(o)$ by $\pts(o_\ell)$ and do not present $\IN_\ell(o)$, we represent $\IN_\ell(o) $ by union the  $\pts(o_{\ell^\prime})$ is identical to $\OUT_{\ell^\prime}(o)$ if $\ell^\prime\stackrel{o}{\rightsquigarrow}\ell$. Consequently, for \textbf{load} statements ($p = *q$), for every object $o \in \pts(q)$ pointed to by the base variable $q$, we insert a copy constraint edge $ p \leftarrow o_{\ell'}$, effectively propagating the points-to information from the address-taken $o$ at the predecessor $\ell^\prime$ to variable $p$.

\begin{figure}[t]
	\centering
	\footnotesize

	\resizebox{\linewidth}{!}{
			$
	\begin{array}{c}
		\infer[\textsc{Alloc}]{\pts(p)\leftarrow\pts(p)\cup\{ o_\ell\}}{p=alloca_o}\vspace{1ex}\quad
		
		\infer[\textsc{Copy}]{\cg\leftarrow\cg\cup\{p\leftarrow q\}}{\ell:p=q}\vspace{1ex}\quad
		
		\infer[\textsc{Phi}]{\cg\leftarrow\cg \cup\{p\leftarrow p_i\}}{\ell:p=\phi(p_1,...p_n)}\vspace{1ex}\\
		
		\infer[\textsc{Field}]{\cg\leftarrow\cg\cup\{p\stackrel{f}{\leftarrow} q\}}{\ell:p=\&q.f}\vspace{1ex}\quad

		\infer[\textsc{Load}]{\cg\leftarrow\cg\cup \{p\leftarrow o_{\ell^\prime}\}}{\ell:p=*q,o\in pts(q),\ell^\prime \stackrel{o}{\rightsquigarrow} \ell}\vspace{1ex}\\
		
		\infer[\textsc{Store}]{\begin{cases}& |\pts(p)|=0\\
				\cg\leftarrow\cg\cup\{o_\ell \leftarrow q\},o^\prime\neq o\Rightarrow\cg\leftarrow\cg\cup\{o^\prime_\ell \leftarrow o^\prime_{\ell^\prime} \}&|\pts(p)|=1\wedge o\in \pts(p) \\
				\cg\leftarrow\cg\cup\{o^\prime_\ell\leftarrow o^\prime_{\ell^\prime} \}\cup\{o_\ell\leftarrow q\}	& |\pts(p)|>1\wedge o\in\pts(p)\\\end{cases}}{\ell:*p=q\quad \ell^\prime \stackrel{o^\prime}{\rightsquigarrow} \ell}\\
	\end{array}
$
}
	\caption{Rules of transforming VFG to constraint graph}
	\label{fig:TransVFGToCG}
\end{figure}
\par For each \textbf{store} statement ($\ell:*p = q$), we follow the standard SFS paradigm. To ensure correctness, we design three specific constraint generation rules to align with these distinct semantic transfer functions:
\begin{itemize}[leftmargin=12pt,itemsep=2pt,topsep=2pt]
	\item  $\vert\pts(p)\vert = 0$: we  does not insert any constraint edges.
	
	\item  $\vert\pts(p)\vert = 1 \wedge o\in\pts(p)$: we insert a constraint edge $o_\ell \leftarrow q$, propagating the points-to set of $q$ to $o_\ell$. For all other variable, $o'$  (where $o' \neq o$), we preserve their values propagated from predecessors by adding copy constraint edges $o'_\ell \leftarrow o'_{\ell'} $.
	
	\item  $\vert\pts(p)\vert > 1\wedge o\in\pts(p)$: we insert a copy constraint $ o_\ell \leftarrow q$ and a copy constraint edge $ o^\prime_\ell \leftarrow o^\prime_{\ell'}$ for all potential defined $o^\prime$ to  preserve points-to information from the predecessor.
\end{itemize}

\subsection{Initialize by Deleting and Inserting Edges According to VFG Changes}\label{sec:init}
\begin{figure}[t]
	\centering
	\footnotesize
	\[
	\begin{array}{c}
		\infer[\textsc{DelAlloc}]{\Delta^-_p\leftarrow\Delta^-_p\cup\{o\}}{\ell:p=alloc_o\in N^-}\vspace{1ex} \quad
		
		\infer[\textsc{DelCopy}]{\cg^-\leftarrow\cg^-\cup\{p\leftarrow q\}}{\ell:p=q\in N^-}\vspace{1ex}\\
		
		\infer[\textsc{DelField}]{\cg^-\leftarrow\cg^-\cup\{p\stackrel{f}{\leftarrow} q\}}{\ell:p=\&q.f\in N^-}\vspace{1ex} \quad
		
		\infer[\textsc{DelPhi}]{\cg^-\leftarrow\cg^-\cup\{p\leftarrow p_i\}}{\ell:p=\phi(p_1,...p_n),i\in[1,n]\in N^-}\vspace{1ex}\\
		
		\infer[\textsc{DelLoad}]{\cg^-\leftarrow \cg^-\cup \{p\leftarrow o_{\ell^\prime}\}}{\ell:p=*q\in G,o\in \pts(p),\ell^\prime\stackrel{o}{\rightsquigarrow}\ell\in E^-}\vspace{1ex}\\
		
		\infer[\textsc{DelStore}]{\shortstack[l]{\(\ell\in N^-\Rightarrow \cg^-\leftarrow \cg^-\cup\{o_{\ell}\leftarrow q\ |\ o\in\pts(p)\}\)\\
				\(\begin{cases} & |\pts(p)| = 0\\ \cg^-\leftarrow \cg^-\cup\{o_\ell \leftarrow o_{\ell^\prime}\} & |\pts(p)| = 1\wedge o\notin\pts(p)\\ \cg^-\leftarrow \cg^-\cup\{o_\ell \leftarrow o_{\ell^\prime}\} & |\pts(p)| > 1\\ \end{cases}\)} }{\ell:*p=q\in G,\ell^\prime\stackrel{o}{\rightsquigarrow}\ell\in E^-}\vspace{1ex}\\
	\end{array}
	\]
	\caption{Remove VFG nodes and edges}
	\label{fig:deledges}
\end{figure}

In this section, we initilize the removed and inserted constraint edges  based on the VFG changes: $E^+, E^-,N^-, N^+$ and initialize both positive deltas $\Delta^+$ and negative deltas $\Delta^-$ in the points-to set according to the constraint graph changes.

\par \textbf{Delete VFG nodes and edges}: as illustrated in Figure \ref{fig:deledges},  
For address, load, copy, phi, and field statements,the deletion process  suffices to remove all incoming constraint edges whose target is the  defined top-level variable 
For instance, deleting a copy statement $p=q$ simply involves removing the edge $p \leftarrow q$.

For \textbf{store} nodes ($\ell: *p = q$), when this node is deleted,\emph{i.e.},$\ell\in N^-$, the constraint edge $o_\ell\leftarrow q$ is removed if p points-to $o$. Afterwards, we remove the constraint edge which propagates points-to set of predecessors to this node based on $p$'s points-to set:

\begin{itemize}[leftmargin=12pt,itemsep=2pt,topsep=2pt]
	\item  $|\mathcal{P}(p)| = 0$:  there is no constraint edge propagating the points-to set from the predecessor. Upon deletion, we do nothing. 
	
	\item  $|\mathcal{P}(p)| = 1 \wedge o\in\pts(p)$: for all $o\notin \pts(p)$, since there exists a constraint edge $o_\ell\leftarrow o_{\ell^\prime}$, upon the value flow edge $\ell^\prime\stackrel{o}{\rightsquigarrow}\ell$ is deleted, we remove $o_\ell \leftarrow o_{\ell'} $.
	
	\item $|\mathcal{P}(p)| > 1 \wedge o\in\pts(p)$: we remove the constraint edges $ o_\ell \leftarrow o_{\ell^\prime}$ upon the value flow edge $\ell^\prime\stackrel{o}{\rightsquigarrow}\ell$ is deleted.
\end{itemize}

\par \textbf{Insert VFG nodes and edges.} As illustrated in Figure \ref{fig:insedges}, for insertion of \textbf{load}  statement ($\ell: *p = q$) and its inserted value flow edge $\ell^\prime\stackrel{o}{\rightsquigarrow} \ell$, we insert the constraint edge $o_{\ell^\prime}\leftarrow p$. For insertion of \textbf{store} statement ($\ell: *p = q$), we firstly insert the newly generated constraint edge $o_\ell\leftarrow q$ if p points-to $o$. Afterwards,we insert the constraint edge if a value flow edge $\ell^\prime\stackrel{o}{\rightsquigarrow} \ell$ is inserted to this node based on $p$'s points-to set:

\begin{itemize}[leftmargin=12pt,itemsep=2pt,topsep=2pt]
	\item $|\mathcal{P}(p)| = 0$: no constraint edges are inserted.
	
	\item $|\mathcal{P}(p)| = 1 \wedge o\in\pts(p)$:  upon a value flow edge $\ell^\prime\stackrel{o}{\rightsquigarrow}\ell$ in inserted,  constraint edge $o_\ell \leftarrow o_{\ell'}$  is added for $o\notin\pts(p)$.
	
	\item $|\mathcal{P}(p)| > 1\wedge o\in\pts(p)$: upon a value flow edge $\ell^\prime\stackrel{o}{\rightsquigarrow}\ell$ is inserted, we  insert a constraint edge $o_\ell \leftarrow o_{\ell'}$   to preserve the incoming flow.
\end{itemize}
\begin{figure}[t]
	\centering
	\footnotesize
	\[
	\begin{array}{c}
		\infer[\textsc{InsAlloc}]{\Delta^+_p\leftarrow\Delta^+_p\cup\{o\}}{\ell:p=alloc_o\in N^+}\vspace{1ex} \quad	
		\infer[\textsc{InsCopy}]{\cg^+\leftarrow\cg^+\cup\{p\leftarrow q\}}{\ell:p=q\in N^+}\vspace{1ex}\\
		
		\infer[\textsc{InsPhi}]{\cg^+\leftarrow\cg^+\cup\{p\leftarrow p_i\}}{\ell:p=\phi(p_1,...p_n),i\in[1,n]\in N^+}\vspace{1ex}\quad
		
		\infer[\textsc{InsField}]{\cg^+\leftarrow\cg^+\cup\{p\stackrel{f}{\leftarrow} q\}}{\ell:p=q\in N^+}\vspace{1ex} \\

		\infer[\textsc{InsLoad}]{\cg^+\leftarrow\cg^+\cup \{p\leftarrow o_{\ell^\prime}\}}{\ell:p=*q\in G^\prime\quad o\in \pts(q)\quad \ell^\prime \stackrel{o}{\rightsquigarrow }\ell \in E^+}\vspace{1ex}\\
		
		\infer[\textsc{InsStore}]{\shortstack[l]{\(\ell\in N^+\Rightarrow \cg^+\leftarrow \cg^+\cup\{o_{\ell}\leftarrow q\ | \ o\in\pts(p)\}\)\\
				\(\begin{cases} & |\pts(p)| = 0\\ \cg^+\leftarrow \cg^+\cup\{o_\ell \leftarrow o_{\ell^\prime}\} & |\pts(p)| = 1\wedge o\notin\pts(p)\\ \cg^+\leftarrow \cg^+\cup\{o_\ell \leftarrow o_{\ell^\prime}\} & |\pts(p)| > 1\\ \end{cases}\)} }{\ell:*p=q\in G^\prime,\ell^\prime\stackrel{o}{\rightsquigarrow}\ell\in E^+}\vspace{1ex}\\
	\end{array}
	\]
	\caption{Insert VFG nodes and edges}
	\label{fig:insedges}
\end{figure}
\vspace{-3mm}
\begin{example}
Considering the example in Fig.~\ref{fig:motivating_example}, 
upon deleting node 2, we modify the value-flow graph by removing value flow edges $1 \stackrel{o^x}{\rightsquigarrow} 2$ and $2 \stackrel{o^x}{\rightsquigarrow} 3$, and inserting a new value flow edge $1 \stackrel{o^x}{\rightsquigarrow} 3$. 
Correspondingly, we update the constraint graph: 
(1) The direct constraint $o^x_2 \leftarrow b$ is removed  according to the Rule \textsc{DelStore}. 
(2) For the deleted incoming value flow $1 \stackrel{o^x}{\rightsquigarrow} 2$, since the strong update at node 2 previously killed the definition from Node 1, no predecessor constraint $o^x_2 \leftarrow o^x_1$ existed to be removed according to the Rule \textsc{DelStore}. 
(3) For the deleted outgoing flow $2 \stackrel{o^x}{\rightsquigarrow} 3$, we remove the constraint $c \leftarrow o^x_2$ following the \textsc{DelLoad} rule. 
(4) Conversely, for the newly inserted flow $1 \stackrel{o^x}{\rightsquigarrow} 3$, we add the constraint $c \leftarrow o^x_1$ according to the \textsc{InsLoad} rule.
\end{example}

\subsection{Filter and Propagate Difference of Points-to Set on Constraint Graph}\label{sec:filter}

\par  After identifying the structural changes in the constraint graph, we first initialize the potential changes in variables’ points-to sets based on the insertion and deletion of copy edges (Rules \textbf{\textsc{P-InsCopy}} and \textbf{\textsc{P-DelCopy}}).
Given these initial deltas, we apply the \textbf{\textsc{Filter}} rule to eliminate spurious points-to updates using the \emph{change-local property} of the constraint graph and the \textbf{\textsc{Propagate}} rule to propagate the actual changes along the graph.
Such propagation may further trigger deletion or insertion of constraint edges (\textbf{\textsc{P-Store}} and \textbf{\textsc{P-Load}}). After each propagation round, the newly inserted or deleted constraint edges are treated as the input for the next iteration until a fixed point is reached.  The inference rules governing this process are presented in Figure \ref{fig:filterAndProp}. 

\textbf{Initialization of Deltas in Points-to sets}.
When a copy constraint edge $p \leftarrow q$ is inserted, if p and q previously share the same representative node,\emph{i.e.},$R(p) = R(q)$ which means the points-to sets of both p and q are the same, thus q does not introduce any additional objects. Otherwise, we initialize the positive delta of $p$ as $\Delta^+_p \leftarrow \Delta^+_p \cup \mathcal{P}(q)$, indicating that $p$ may now point to new objects in $q$'s points-to set. We then process the deleted constraint edges $p \leftarrow q$. If p and q are previously in the same SCC, then this SCC may be broken. We perform an SCC detection on the updated constraint graph. If p and q do not belong to the same SCC in the updated graph, we initialize the negative delta as $\Delta^-_p \leftarrow \Delta^-_p \cup \pts(q)$, indicating that $p$ may lose the objects in $q$'s points-to set.
\begin{figure}[t]
	\centering
		{\footnotesize
		\[\begin{array}{c}
			
			\infer[\textsc{P-InsCopy}]{
				R(p)\neq R(q)\Rightarrow \Delta^+_p\leftarrow\Delta^+_p\cup\pts(q)}{q\rightarrow p\in\cg^+}\vspace{1ex}\\\
			
			\infer[\textsc{P-DelCopy}]{
				R(p)\neq R(q)\Rightarrow\Delta^-_p\leftarrow\Delta^-_p\cup\pts(q)}{q\rightarrow  p\in\cg^-} \\

			\infer[\textsc{Filter}]{
				\shortstack[l]{
					\(\forall p \leftarrow q\in\mathcal{E}\Rightarrow \Delta^-_p \leftarrow \Delta^-_p \setminus \pts(q)\)\\
					\(\Delta^+_p \leftarrow \Delta^+_p \setminus \pts(p)\) \quad
					\(\Delta^+_p \leftarrow \Delta^+_p \setminus \Delta^-_p\) \quad 
					\(\Delta^-_p\leftarrow\Delta^-_p\cap \pts(p)\) \\
					\(\pts(p)\leftarrow\pts(p)\backslash\Delta^-_p\cup\Delta^+_p\quad \textsc{Propagate}(p,\Delta^-_p,\Delta^+_p)\)}
			}{p, \Delta^-_p, \Delta^+_p}
			\vspace{1em}\\
			
			\infer[\textsc{Propagate}]{\shortstack[l]{\(\forall  s\leftarrow p\in\mathcal{E}, \Delta_s^+\leftarrow \Delta_s^+\cup\Delta_p^+, \Delta_s^-\leftarrow\Delta_s^-\cup\Delta_p^-\)\\
					\(\forall \ell:*p=q\in G^\prime, \ \textsc{P-STORE}(\ell:*p=q, \Delta^-_p, \Delta^+_p)\) \\
					\(\forall \ell:q=*p\in G^\prime, \ \textsc{P-LOAD}(\ell:q=*p, \Delta^-_p, \Delta^+_p)\)}
			}{p, \Delta^-_p, \Delta^+_p}
			\vspace{0.5em}\\
			
			\infer[\textsc{P-Load}]
			{
				\cg^-\leftarrow\cg^-\cup\{p\leftarrow \indelo\}\quad
				\cg^+ \leftarrow\cg^+\cup\{p\leftarrow \ininso \}
			}
			{
				\ell:p=*q,o^+\in\Delta_p^+,o^-\in\Delta_p^+,\ell^\prime\stackrel{o^+}{\rightsquigarrow}\ell,\ell^\prime\stackrel{o^-}{\rightsquigarrow}\ell
			}\vspace{1em}\\
			
			\infer[\textsc{P-Store}]
			{
				\shortstack[l]{
					\(|\pts(p)|=0 \Rightarrow \mathcal{E^-}\leftarrow\mathcal{E^-}\cup\{e_C|e_C:o_\ell \leftarrow o_{\ell^\prime} \in \mathcal{E}\}\)\\
					\(|\pts(p)|=1\Rightarrow\begin{cases}
						\mathcal{E^-}\leftarrow\mathcal{E^-}\cup\{e_C|e_C: o_\ell\leftarrow o_{\ell^\prime}  \in \mathcal{E}\} & o\in\pts(p)\\
						\mathcal{E^+}\leftarrow\mathcal{E^+}\cup\{e_C|e_C:o_\ell\leftarrow o_{\ell^\prime}\notin \mathcal{E}\} & o\notin \pts(p)\\
					\end{cases} \)\\
					\(|\pts(p)|>1 \Rightarrow \mathcal{E}^+\leftarrow\mathcal{E}^+\cup\{e_C|e_C:o_\ell\leftarrow o_{\ell^\prime} \notin \mathcal{E}\}\)\\
					\(\mathcal{E}^-\leftarrow\mathcal{E}^-\cup\{o^-_\ell\leftarrow q \},\mathcal{E}^+\leftarrow\mathcal{E}^+\cup\{ o^+_\ell\leftarrow q\}\)
				}
			}
			{ \ell:*p=q,\ell^\prime\stackrel{o}{\rightsquigarrow}\ell,,\;o^+\in\Delta_p^+,\;o^-\in\Delta_p^+,\;
			}\vspace{1ex}\\

		\end{array}
		\]
			}
	\caption{Rules of filtering and propagating increase and decrease in the points-to set }
	\label{fig:filterAndProp}
\end{figure}

\par \textbf{Topological Propagation and Filtering}.  We process constraint nodes in topological order to handle dependencies correctly. The initial deltas $\Delta^+$ and $\Delta^-$ derived from edge insertions/deletions or propagated from predecessors represent only \emph{candidate} changes.  \begin{itemize}[leftmargin=12pt,itemsep=2pt,topsep=2pt]
	\item \textbf{Filtering Decreases ($\Delta^-_p$):}  for each object $o$ in the candidate $\Delta^-_p$, we inspect all existing predecessors $\{q \mid p\leftarrow q \in \mathcal{G}\}$. If there exists any predecessor $q$ such that $o \in \pts(q)$,  we remove $o$ from $\Delta^-_p$ and $\pts(p)$ retains $o$. 
	
	\item \textbf{Filtering Increases ($\Delta^+_p$):}  if a candidate increased object $o$ is already present in $\pts(p)$, it constitutes a redundant update. Thus, we remove those objects from $\Delta^+_p$ which is already in $\pts(p)$.
\end{itemize}

\begin{figure}
	\includegraphics[width=0.6\linewidth]{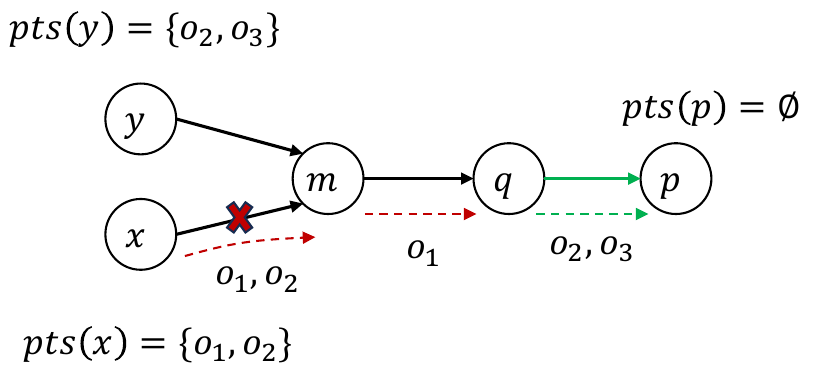}
	\caption{Simultaneous propagation of points-to additions and deletions.}
	\label{fig:topo}
\end{figure}A critical challenge in incremental analysis is handling "interleaved" updates, where a variable might receive both additions and deletions in the same iteration. 
For instance, as shown in Figure \ref{fig:topo}, if a new edge $q \to p$ is inserted while $q$ itself is losing a target $o_1$ due to an upstream deletion $x \to m$, eagerly adding $\pts(q)=\{o_1,o_2,o_3\}$ to $\Delta^+_p$ would incorrectly propagate $o_1$ to $p$.
To address this, the \textbf{\textsc{Filter}} rule applies strict set difference operations: we remove any element found in $\Delta^-_p$ from $\Delta^+_p$ ,\emph{i.e.}, $\Delta^+_p \leftarrow \Delta^+_p \setminus \Delta^-_p$). This also guarantees that $\Delta^+_p$ and $\Delta^-_p$ are mutually exclusive.  We subtract $\Delta^-_p=\{o_1\}$ from $\Delta^+_p=\{o_1,o_2,o_3\}$, resulting in the updated $\Delta^+_p=\{o_2,o_3\}$. The \textbf{\textsc{Filter}} rule also constrains the negative delta to include only elements that are actually present in the current points-to set ,\emph{i.e.}, $\Delta^-_p \leftarrow \Delta^-_p \cap \pts(p)$). In this example, assuming $p$ is initially empty or does not contain $o_1$, the operation yields $\Delta^-_p = \{o_1\} \cap \pts(p) = \emptyset$. %

\textbf{Handling Side Effects on Load/Store Statements.}
Finally, we update the points-to set of $p$ locally: $\mathcal{P}(p) \leftarrow (\mathcal{P}(p) \setminus \Delta^-_p) \cup \Delta^+_p$, and propagate the refined $\Delta^+_p$ and $\Delta^-_p$ to $p$'s successors. Changes in the points-to set of a pointer variable $p$ ($\Delta_p^+$ or $\Delta_p^-$) can trigger more constraint edge updates.

Load Statements ($\ell: p = *q$).
As defined in rule \textbf{\textsc{P-Load}},  for any object $o$ added to $q$'s points-to set,\emph{e.g.},$o \in \Delta_q^+$), we insert a copy constraint edge $p \leftarrow o_{\ell'}$. Conversely,constraint edge $p \leftarrow o_{\ell'}$ is removed for object $o \in \Delta_q^-$.

Store Statements ($\ell: *p = q$).
As defined in rule \textbf{\textsc{P-Store}},  we add a constraint edge $ o^+_\ell\leftarrow q$ and remove $ o^-_\ell\leftarrow q$ for $o^+ \in \Delta_p^+$ and $o^- \in \Delta_p^-$ respectively.
Additionally, we also incrementally modify whether the points-to information of current variable should be propagated from the predecessor depending on $p$'s points-to set:
\begin{itemize}[leftmargin=12pt,itemsep=2pt,topsep=2pt]
	\item $|\mathcal{P}(p)| = 0$: the store becomes a no-op and any existing bypass edges $o_\ell \leftarrow o_{\ell'}$ is deleted.
	\item $|\mathcal{P}(p)| = 1$: for the unique target object $o \in \mathcal{P}(p)$, we delete the constraint edge $o_\ell \leftarrow o_{\ell^\prime}$ to kill the points-to information propagated from predecessor $\ell^\prime$ if it exists . For all other objects $o' \notin \mathcal{P}(p)$, the bypass edge $o'_\ell \leftarrow o'_{\ell'} $ is inserted if missing
	.
	\item $|\mathcal{P}(p)| > 1$: For any object $o$ that reaches this statement, we must insert the bypass edge $ o_\ell \leftarrow o_{\ell'}$ if it is missing, restoring the points-to information propagated from the predecessor.
\end{itemize}

\begin{lemma}[Monotonicity]
	\label{lemma:mon}
	\red{Points-to reductions and expansions induce only edge deletions and insertions, respectively, which in turn can only shrink and enlarge points-to sets.
	}
\end{lemma}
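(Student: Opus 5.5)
The plan is to split the statement into its two halves and prove each by a case analysis over the rules in Figure~\ref{fig:filterAndProp}. For the first half (``reductions induce only deletions, expansions only insertions''), I consider a single \textsc{Filter} step at a node $p$ where only $\Delta^-_p$ is nonempty (a pure reduction) or only $\Delta^+_p$ is nonempty (a pure expansion). By the \textsc{Filter} rule, $\Delta^+_p\cap\Delta^-_p=\emptyset$, so a mixed step splits into a pure reduction on $\Delta^-_p$ and a pure expansion on $\Delta^+_p$, and it suffices to treat the two pure cases. \textsc{Propagate} itself adds no edges; it only forwards deltas of the same sign to successors. Edge changes therefore arise only through \textsc{P-Load} and \textsc{P-Store}, and I check those rules one at a time.

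For \textsc{P-Load}, an object leaving $\pts(q)$ only schedules $p\leftarrow o_{\ell'}$ into $\cg^-$, and an object entering only schedules into $\cg^+$. This is immediate from the rule. For \textsc{P-Store}, the direct edges $o^-_\ell\leftarrow q$ and $o^+_\ell\leftarrow q$ behave in the same way. The only nontrivial part is the bypass edges $o_\ell\leftarrow o_{\ell'}$, whose presence depends on the cardinality $|\pts(p)|$. I would characterise the set of bypass edges at $\ell$ as a function $B(\pts(p))$:
\begin{itemize}
\item $B(S)=\emptyset$ if $|S|=0$;
\item $B(S)=\{o'_\ell\leftarrow o'_{\ell'} : o'\notin S\}$ if $|S|=1$;
\item $B(S)=$ all bypass edges if $|S|>1$.
\end{itemize}
The key claim is that $S\subseteq S'$ implies $B(S)\subseteq B(S')$. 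I check this by enumerating the cardinality transitions $0\to 1$, $1\to k{>}1$ and $0\to k{>}1$, plus the trivial cases where the cardinality class is unchanged. In the $|S|=1$ to $|S|=1$ case, inclusion forces $S=S'$. With this monotonicity of $B$, a pure reduction of $\pts(p)$ can only remove bypass edges and a pure expansion can only add them, which is exactly the first half of the lemma.

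For the second half (``deletions shrink, insertions enlarge''), I would argue on the constraint system. The solution maintained by the algorithm is intended to be the least solution of the inclusion constraints $\pts(q)\subseteq\pts(p)$ for $p\leftarrow q\in\cg$, together with the alloc facts. That least solution is monotone in the edge set, since fewer constraints admit every old solution and hence give a pointwise smaller least solution. At the operational level, I would observe that \textsc{P-DelCopy} only ever populates $\Delta^-$ and \textsc{P-InsCopy} only ever populates $\Delta^+$. The \textsc{Filter} rule never turns a $\Delta^-$ into a $\Delta^+$: it only intersects and subtracts. The update $\pts(p)\leftarrow\pts(p)\setminus\Delta^-_p\cup\Delta^+_p$ with one of the deltas empty is therefore a pure shrink or a pure growth, and \textsc{Propagate} preserves the sign along successors.

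The main obstacle is the chain of side effects. A deletion shrinks some $\pts(p)$, which through \textsc{P-Store} deletes further edges, which shrinks further sets, and so on. I need to ensure that no step in this chain flips sign. This is an induction on the number of rounds of the while loop in Algorithm~\ref{alg:getChangeMethod}, with the invariant ``all pending deltas and edge changes in this round have the same sign as the originating change.'' The monotonicity of $B$ established above is precisely what makes the inductive step go through. A subtle point I would flag is the $|S|=1\to|S|=1$ replacement case, which is not a pure reduction or expansion. I would exclude it from the lemma's hypothesis, since it is handled by the disjointness of $\Delta^+$ and $\Delta^-$ in the later termination argument.
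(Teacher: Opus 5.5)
Your proposal is correct and follows essentially the paper's route. The first half is the same case analysis of \textsc{P-Load} and \textsc{P-Store} across the no-op/strong/weak transitions; your monotone bypass map $B(\pts(p))$ is a tidier packaging of the paper's list of weak-to-strong, to-no-op, no-op-to-strong/weak and strong-to-weak transitions. The second half is the same observation as the paper's: deleted and inserted copy edges seed only $\Delta^-$ and $\Delta^+$ respectively, and \textsc{Filter} and \textsc{Propagate} never flip the sign. Your explicit remark that the singleton replacement $\{o\}\to\{o'\}$ is neither a reduction nor an expansion is a point the paper leaves implicit. Note that such a mixed step splits into pure steps only sequentially, through an intermediate state such as $\{o,o'\}$ or $\emptyset$; it is not the union of the two pure effects taken from the same starting set.
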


\begin{proof}
	\red{Consider a store statement $\ell: *p=q$. whose generated constraint edges depend on $\pts(p)$. If $\pts(p)$ decreases from multiple objects to a singleton ${o}$, the update changes from weak to strong: all bypass edges for $o$ and all edges from $q$ to objects removed from $\pts(p)$ are deleted, while bypass edges for the remaining objects are preserved. If the store becomes a no-op, all associated constraint edges are removed. Therefore, the resulting target edge set can only decrease. Conversely, if $\pts(p)$ expands, causing a transition from a no-op to a strong or weak update, or from a strong update to a weak update, the target edge set can only increase. Edge deletions only trigger $\Delta^-$ and further edge deletions, whereas edge insertions only enlarge them and trigger further insertions.
	}

\end{proof}

\begin{definition}[Object points-to cycle]
	\label{def:object-pts-cycle}
	\red{For object nodes $o,o'$, write $o\leadsto_{\mathrm{pt}}o'$ iff $o'\in\pts(o)$. An object points-to cycle is a nonempty sequence $o_1,\ldots,o_n$ such that $o_i\leadsto_{\mathrm{pt}}o_{i+1}$ for $1\le i<n$ and $o_n\leadsto_{\mathrm{pt}}o_1$. The points-to relation is \emph{object-acyclic} iff it contains no such cycle.}
\end{definition}

\begin{theorem}
	\red{If the points-to relation remains object-acyclic during the analysis, the interleaved analysis algorithm terminates.}
\end{theorem}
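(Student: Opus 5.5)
We argue by a finiteness-plus-ranking argument on the state of the algorithm. The state after each outer iteration of Algorithm~\ref{alg:getChangeMethod} is determined by the current edge set $\cg$ and the points-to map $\pts$. Both live in finite domains. The constraint nodes are top-level variables plus renamed objects $o_\ell$, with $\ell$ ranging over the finitely many statements of $G^\prime$. Edges are pairs of such nodes, and points-to sets are subsets of the finite $\mathcal{O}$. So the state space is finite, and termination reduces to showing that no state is revisited. Equivalently, we must rule out an infinite alternation in which some object is added to and removed from the same node, or some edge is inserted and deleted, infinitely often.

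\textbf{Termination of a single round.} Within one round, FilterAndPropagate visits nodes once each in the topological order $\mathcal{N}_T$ of the collapsed (acyclic) graph. Each node applies \textsc{Filter} once and pushes its deltas only to later nodes. Each \textsc{P-Load}/\textsc{P-Store} call adds finitely many edges to $\cg^-$ or $\cg^+$. Hence every round terminates.

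\textbf{Ruling out oscillation across rounds.} The \textsc{Filter} rule makes $\Delta^+_p$ and $\Delta^-_p$ disjoint. By Lemma~\ref{lemma:mon}, a reduction only produces deletions and further reductions, and an expansion only produces insertions and further expansions. Now suppose an object $o$ is added to, and later removed from, some node $p$. Trace the causes backwards through Lemma~\ref{lemma:mon}. The removal must come from a chain of deletions rooted in a decrease of some pointer's points-to set. The re-addition must come from a chain of insertions rooted in an increase. Because polarity is preserved along each chain, such a flip can only recur if the change to $o$ at $p$ feeds back into the base pointer of a load or store that controls the edges carrying $o$ into $p$.

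\textbf{Using object-acyclicity.} The feedback loop of the previous paragraph requires a cycle in which $o$'s membership in some $\pts(o^\prime_\ell)$ governs the edges that deliver $o$. Projecting the renamed nodes $o^\prime_\ell$ onto their objects gives a chain $o_1\leadsto_{\mathrm{pt}}\cdots\leadsto_{\mathrm{pt}}o_1$, which is an object points-to cycle in the sense of Definition~\ref{def:object-pts-cycle}. This contradicts the hypothesis. We then order objects by a topological rank $\mathrm{rk}$ of the acyclic relation $\leadsto_{\mathrm{pt}}$. An induction on $\mathrm{rk}$ shows that the membership of each object at each node changes polarity at most once: first for objects of minimal rank, whose flips cannot be triggered by flips of higher rank, and then upward. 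This bounds the total number of changes by $|\mathcal{N}|\cdot|\mathcal{O}|$ plus the number of possible edges, so the outer loop exits.

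\textbf{Main obstacle.} The delicate step is the polarity-tracing argument. A strong update is non-monotone: when $\pts(p)$ shrinks to a singleton, bypass edges for that object are deleted, which looks like a decrease causing a decrease. But when $\pts(p)$ shrinks from one object to a different singleton, Lemma~\ref{lemma:mon} must be applied carefully, because an object leaving $\pts(p)$ inserts its bypass edge. I would first try to handle this by treating a strong-update kill as depending on $p$'s points-to set. That turns such a kill into a points-to edge from $\pts(p)$'s object, so the rank induction still applies. If that fails, I would strengthen the induction to track the pair consisting of an object and its base-pointer objects.
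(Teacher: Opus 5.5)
Your skeleton is the paper's: a finite state space, polarity preservation via Lemma~\ref{lemma:mon}, and a causal cycle through loads and stores that projects to an object points-to cycle. The step that fails is the one you use to close the argument. You claim that a flip of $o$ at $p$ ``can only recur'' through feedback, so each membership changes polarity at most once and the number of changes is bounded by $|\mathcal{N}|\cdot|\mathcal{O}|$ plus the number of edges. Recurrence needs no feedback, because the initial change set mixes insertions and deletions whose load/store-derived consequences arrive after different numbers of rounds.

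Take $w=\phi(z_1,z_2)$ with loads $z_1=*y$, $y_2=*y_1$, $z_2=*y_2$. Initially $\pts(y)=\{o^3\}$, $\pts(y_1)=\emptyset$, $\pts(o^5_{\ell_2})=\{o^6\}$, $o\in\pts(o^6_{\ell_3})$, and $o$ reaches $w$ only through $o^3_{\ell_1}\rightarrow z_1$. Let the change insert a copy edge $w\rightarrow x$, remove $o^3$ from $\pts(y)$ and add $o^5$ to $\pts(y_1)$.
In round~1, $x$ gains $o$, while \textsc{P-Load} queues the deletion of $o^3_{\ell_1}\rightarrow z_1$ and the insertion of $o^5_{\ell_2}\rightarrow y_2$.
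In round~2, $z_1,w,x$ lose $o$, while $y_2$ gains $o^6$ and $o^6_{\ell_3}\rightarrow z_2$ is queued.
In round~3, $z_2,w,x$ regain $o$.
The only object facts involved, $o^3\leadsto_{\mathrm{pt}}o$ and $o^5\leadsto_{\mathrm{pt}}o^6\leadsto_{\mathrm{pt}}o$, are acyclic. So object-acyclicity can only rule out \emph{unbounded} oscillation, and the rank induction must be dropped.

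The paper closes the argument without counting flips. Non-termination gives equal round-end states $S_i=S_j$ with $i<j$, where states include the pending sets $\mathcal{E}^\pm$. Filtering keeps $\mathcal{E}^-_m\cap\mathcal{E}^+_m=\emptyset$, so every derived edge deleted in rounds $i,\ldots,j-1$ is reinserted there and conversely. Hence these deletion and insertion sets coincide and are nonempty. Following the causes of the deletions gives a closed chain $e_1\Rightarrow\cdots\Rightarrow e_n\Rightarrow e_1$ whose links yield $o_{r+1}\in\pts(o_r)$, i.e.\ an object points-to cycle.

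Your ``main obstacle'' is genuine but left open, and the suggested patch does not follow from the hypothesis. The problem is not confined to replacements such as $\{o_1\}\to\{o_2\}$ at $\ell:*p=q$. The pure expansion $\{o_1\}\to\{o_1,o_2\}$ inserts the bypass edge of $o_1$, and $\emptyset\to\{o_1\}$ inserts bypass edges for every other object. So the object whose renamed node changes next need not be the one that entered or left $\pts(p)$, and such a link projects to no $\leadsto_{\mathrm{pt}}$ step. Adding a dependency from the objects of $\pts(p)$ to the bypassed object, as you propose, produces a relation whose acyclicity is not implied by object-acyclicity of $\leadsto_{\mathrm{pt}}$.

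The paper's proof does not treat this case separately either. It reads every link of the cycle as a store losing its target $o_r$ and deleting $a_r\rightarrow o_r$, which is exactly where $o_{r+1}\in\pts(o_r)$ comes from. To complete your version you would have to show that bypass toggles cannot lie on such a cycle, or strengthen the hypothesis.
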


\begin{proof}
	\red{Let $S_k=\langle G_k,\mathcal{E}_k^-,\mathcal{E}_k^+\rangle$ be the state after round $k$, where $G_k$ is the constraint graph annotated with the current points-to sets and $\mathcal{E}_k^-$ and $\mathcal{E}_k^+$ are the load/store-derived edge deletions and insertions.  If the algorithm did not terminate, then $S_i=S_j$ for some $i<j$. Since filtering makes $\mathcal{E}_m^-\cap\mathcal{E}_m^+=\emptyset$, every edge deleted between these two equal states must be reinserted in a later round, and conversely. Thus,
	$
	D_{i,j}^-=\bigcup_{i\le m<j}\mathcal{E}_m^-
	=\bigcup_{i\le m<j}\mathcal{E}_m^+=D_{i,j}^+\neq\emptyset.
	$
	Let $e\Rightarrow e'$ mean that the points-to reduction caused by derived-edge deletion $e$ triggers deletion $e'$ through \textsc{P-Load} or \textsc{P-Store} according to Lemma~\ref{lemma:mon}. Since every update in $D_{i,j}^-$ must be reversed before $S_j$, following these causes yields a cycle $e_1\Rightarrow e_2\Rightarrow\cdots\Rightarrow e_n\Rightarrow e_1$. Let $e_r(1\leq r\leq n)$ be induced when the store $*p_r=a_r$ loses target object $o_r$. If $e_r$ causes $e_{r+1}$, the reduction propagated through $o_r$ removes $o_{r+1}$ at the next store; thus $o_{r+1}\in\pts(o_r)$, with indices modulo $n$. Hence,
	$
	o_2\in\pts(o_1),\ o_3\in\pts(o_2),\ldots,\ o_1\in\pts(o_n),
	$
	which gives $o_1\leadsto_{\mathrm{pt}}o_2\leadsto_{\mathrm{pt}}\cdots\leadsto_{\mathrm{pt}}o_n\leadsto_{\mathrm{pt}}o_1$. %
	This contradicts the acyclicity assumption. Therefore, the algorithm terminates.}
	
\end{proof}

\begin{theorem}[Consistency]
	\red{Under the object-acyclicity condition , for each statement type in the updated program, the constraint graph and points-to sets produced by our incremental method are consistent with those derived from the data-flow functions of full-sparse flow-sensitive pointer analysis.}
\end{theorem}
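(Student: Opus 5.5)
The plan is to prove consistency as an invariant maintained across rounds of Algorithm~\ref{alg:getChangeMethod}. The invariant $\mathcal{I}$ has two parts.

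\begin{itemize}
\item \emph{Structural part.} For every statement $\ell$ of $G^\prime$, the edge set $\mathcal{E}$ equals the set that the rules of Figure~\ref{fig:TransVFGToCG} would generate from $G^\prime$ under the current $\pts$.
\item \emph{Semantic part.} For every renamed variable, $\pts(o_\ell)=\OUT_\ell(o)$, with $\IN_\ell(o)=\bigcup_{\ell^\prime\stackrel{o}{\rightsquigarrow}\ell}\pts(o_{\ell^\prime})$. The points-to sets form the least solution of the inclusion constraints of $\mathcal{E}$ together with the alloc facts.
\end{itemize}

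Once $\mathcal{I}$ holds at termination, which is guaranteed by the previous termination theorem under object-acyclicity, consistency follows. Equation~(\ref{dataflowequation}) and Definitions~\ref{def:noop}--\ref{def:wu} are then exactly the least solution of the constraint system.

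\textbf{Step 1: one-shot correspondence.} I first show, statement type by statement type, that the constraint rules of Figure~\ref{fig:TransVFGToCG} encode the SFS transfer functions for a fixed $\pts$. Alloc, copy, phi and field are immediate. For a load $p=*q$, the edges $p\leftarrow o_{\ell^\prime}$ over $o\in\pts(q)$ and $\ell^\prime\stackrel{o}{\rightsquigarrow}\ell$ give $\pts(p)\supseteq\bigcup_{o\in\pts(q)}\IN_\ell(o)$. For a store, I split into the three cases on $|\pts(p)|$.
\begin{itemize}
\item If $|\pts(p)|=0$, no edges are generated, which is the no-op of Definition~\ref{def:noop}.
\item If $|\pts(p)|=1$, the edge $o_\ell\leftarrow q$ is present and the bypass edge for $o$ is absent. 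This is exactly $\KILL_\ell(o)=\IN_\ell(o)$, while bypass edges for $o^\prime\neq o$ are kept.
\item If $|\pts(p)|>1$, both the edge from $q$ and the bypass edges are present, matching the weak update.
\end{itemize}
From this I conclude that the least model of $\mathcal{E}$ coincides with the SFS fixpoint whenever $\mathcal{E}$ is generated from the final $\pts$.

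\textbf{Step 2: the structural invariant is restored.} I argue that after \textsc{InitCGchanges} the edge set matches the rules applied to $G^\prime$ with the \emph{old} $\pts$. This is a case check of Figures~\ref{fig:deledges} and~\ref{fig:insedges} against Figure~\ref{fig:TransVFGToCG}. Each later points-to change $\Delta^\pm_p$ then triggers \textsc{P-Load} and \textsc{P-Store}, which recompute exactly the edges whose generating side condition depends on $\pts(p)$. Lemma~\ref{lemma:mon} shows these updates go in the right direction. So when $\mathcal{E}^-=\mathcal{E}^+=\emptyset$, $\mathcal{E}$ is generated by the final $\pts$.

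\textbf{Step 3: the semantic invariant.} I must show that \textsc{Filter}/\textsc{Propagate} in topological order over the SCC-collapsed graph produce the least solution of the updated constraints. I will do this by induction along $\mathcal{N}_T$, following the change-locality argument of IPA. The argument has three parts.
\begin{itemize}
\item \emph{Decreases.} An object $o\in\Delta^-_p$ is removed iff no current predecessor still holds $o$. Because predecessors outside the SCC of $p$ are already final by induction, and a whole SCC shares one representative, the retained $o$ is supported by a genuine derivation. Conversely, a removed $o$ has no derivation.
\item \emph{Increases.} $\Delta^+_p\setminus\Delta^-_p$ together with $\Delta^+_p\setminus\pts(p)$ adds exactly the newly derivable objects. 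Mutual exclusion of the two deltas prevents an object scheduled for deletion upstream from being re-added, as in Figure~\ref{fig:topo}.
\item \emph{Initial deltas.} The deltas from \textsc{P-InsCopy}/\textsc{P-DelCopy} are skipped only when $R(p)=R(q)$, where $p$ and $q$ provably share a points-to set.
\end{itemize}

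The main obstacle is the least-fixpoint (precision) half of Step~3 when insertions and deletions interleave across rounds. An object may be kept in one round because it is supported by a predecessor that is itself removed in a later round via a newly derived edge deletion, leaving a self-supporting spurious value. My plan is to rule this out using object-acyclicity. Any such circular support must pass through a chain of load/store-derived edges, and by the argument in the termination proof this chain forces an object points-to cycle $o_1\leadsto_{\mathrm{pt}}\cdots\leadsto_{\mathrm{pt}}o_1$. Within a single round, circular support inside the constraint graph is excluded by SCC collapsing. If that reduction does not go through cleanly, the fallback is to order objects by the acyclic relation $\leadsto_{\mathrm{pt}}$ and prove leastness by well-founded induction on that order.
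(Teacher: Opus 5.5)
\textbf{Verdict: your plan has a genuine gap at the leastness step, and that step cannot be repaired under the stated hypothesis.} Your route is the paper's own. Steps~1--2 show that the maintained edge set agrees with the transformation rules, which matches the paper's store/load cases. Step~3 gives a local argument that every copy constraint holds, which matches the paper's case analysis on $\Delta^-_p$ and $\Delta^+_p$. Leastness then comes from turning circular support through load/store-derived edges into an object points-to cycle, and that last step is the one that fails.

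\textbf{Why the chain argument fails.} The argument you import from the termination proof assumes the following: when a store $*p_r=a_r$ loses target $o_r$, the resulting reduction flows through the content of $o_r$, giving $o_{r+1}\in\pts(o_r)$. The weak-to-strong and to-no-op transitions of \textsc{P-Store} break this. When $\pts(p)$ shrinks from $\{o_1,o_2\}$ to $\{o_1\}$, the edge that is deleted is the bypass edge of the \emph{remaining} target $o_1$.

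\textbf{Counterexample.} Take $s=alloc_{o_1}$, $z=alloc_{o_2}$, $\ell':*s=z$ and $p_0=s$. Add a loop containing $p=\phi(p_0,p_1)$, then $\ell:*p=a$ with $\pts(a)=\emptyset$ (a null store), then $p_1=*s$, which reads $o_1$ from $\ell$. In the old version, give $p$ a third $\phi$-operand $p_2=z$.
\begin{itemize}
\item \emph{Old state.} The store at $\ell$ is weak, the edge $(o_1)_\ell\leftarrow(o_1)_{\ell'}$ is present, and $\pts(p_1)=\{o_2\}$.
\item \emph{IncSFS after deleting $p_2=z$.} The deletion seeds $\Delta^-_p=\{o_2\}$. \textsc{Filter} discards it because $o_2\in\pts(p_1)$. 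The nodes $p$ and $p_1$ share no SCC, since an address operand contributes no constraint edge. Nothing else changes, so IncSFS stops with $\pts(p)=\{o_1,o_2\}$.
\item \emph{SFS from scratch on the new program.} It yields $\pts(p)=\{o_1\}$, a strong update with $\OUT_\ell(o_1)=\emptyset$, and $\pts(p_1)=\emptyset$.
\item \emph{Object-acyclicity.} Throughout, the only object fact is $o_2\in\pts(o_1)$, so the points-to relation is object-acyclic.
\end{itemize}

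\textbf{Consequences for your plan.} This final state satisfies your invariant $\mathcal{I}$ verbatim: the edges are exactly those generated by the current $\pts$, and $\pts$ is their least model. So the inference in Step~1, from $\mathcal{I}$ at termination to agreement with SFS, is invalid. Since $\mathcal{E}$ depends on $\pts$, a state that is the least model of its own edge set can still be a strictly larger, self-supporting fixed point of the SFS equations. Your fallback, well-founded induction on $\leadsto_{\mathrm{pt}}$, fails on the same example.

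The paper's minimality argument has the same hole. Its chain argument has this flaw, and so does Case~1 of its change-locality theorem, which assumes the path from $o$ to the filtering neighbor survives. So this is not a point where you diverge from the paper.

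\textbf{What survives, and a possible fix.} The fixed-point half of your plan is sound. It suffices if ``consistent'' only means that the result satisfies the data-flow equations. For leastness, the hypothesis must also exclude dependence cycles through address operands. One option: augment the constraint graph with an edge from each load/store pointer operand to the nodes whose incoming edges it controls, and require that every cycle uses copy edges only. Under such a condition, a stratified version of your Step~3 should go through.
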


\begin{proof}[Proof Sketch]
	\red{Consider a copy statement $\ell:p=q$; the same argument applies to field and phi statements. We prove that $\pts(q)\subseteq\pts(p)$. Assume, for contradiction, that some object $o\in\pts(q)$ but $o\notin\pts(p)$. If $o\in\Delta^-_p$, the deletion-filtering rule $\Delta^-_p\leftarrow\Delta^-_p\setminus\pts(q)$ removes $o$ from $\Delta^-_p$. Otherwise, $o$ must be absent from $\Delta^+_p$. However, an object already in $\Delta^+_p$ can be filtered out only if it is already in $\pts(p)$. Another case where $q\rightarrow p$ is newly inserted also results in $o\in\Delta^+_p$. Thus, the assumption is contradicted and $\pts(q)\subseteq\pts(p)$. For Store and load statements, rule \textbf{P-Store} and \textbf{P-Load} guarantee that constraint graph are consistent with those derived from data-flow functions. }

	\red{We prove the minimality under object-acyclic condition. Assume the resulting fixed point is non-minimal. That is if $o_1$ is to be removed from $\pts(p)$,  $r\rightarrow p$ filters the removal of $o_1$ regardless of whether this edge is pre-existing or newly inserted during  interleaved propagation. If removing $o$ must eventually delete $r\rightarrow p$. There must exist a chain $e_1\Rightarrow e_2\Rightarrow\cdots\Rightarrow e_n$, where $e_n=(r\rightarrow p)$. Each step is induced by a store $*p_k=a_k$: losing target $o_k$ deletes $a_k\rightarrow o_k$ and causes $o_k$ to lose $o_{k+1}$, which implies
	$
	o_1\in\pts(p), o_2\in\pts(o_1),\ldots, p\in\pts(o_{n-1}),
	$
	which contradicts object-acyclicity. Hence interleaved analysis will reach the least fixed point.}
\end{proof}

\red{Complete proofs of termination, constraint consistency, and least-fixed-point convergence are provided in the supplementary material archived with the artifact (Section~\ref{sec:data-availability}).}

\section{Evaluation}
\subsection{Implementation}
 We implemented our approach, named \textbf{IncSFS}, on top of the \textsc{SVF} framework (v2.7)~\cite{10.1145/2892208.2892235}, utilizing LLVM 14.0.1 as the underlying infrastructure. \textsc{SVF} is a widely-used, actively maintained open-source framework that supports full-sparse flow-sensitive pointer analysis. To handle the initial incremental updates of the Value-Flow Graph (VFG), we integrated the \textsc{Silva}~\cite{10.1145/3725214} framework into IncSFS. %

We designed our experiments to answer the following questions:

\begin{itemize}[leftmargin=12pt,itemsep=2pt,topsep=2pt]
		
	\item \textbf{RQ1 (Performance Gain):} How much speedup does \textbf{IncSFS} achieve compared to whole-program analysis and traditional \emph{Reset–Recompute} strategy? 
	\item \textbf{RQ2 (Advantage over Two-Phase IPA):} How much performance benefit does IncSFS gain from its \textbf{unified propagation} strategy on the \textbf{constraint graph} compared to the traditional delete-then-insert propagation?
	\item \textbf{RQ3 (Scalability):} How does the analysis time of \textbf{IncSFS} scale with an increasing number of commits?
	\item \textbf{RQ4 (Correctness):} Are the points-to results produced incrementally by \textbf{IncSFS} identical to those obtained from the whole-program analysis? \end{itemize}

\vspace{-5mm}
\subsection{Benchmark and Experimental Setup}
	
	\begin{table} %
		\centering
		\footnotesize
\captionof{table}{Subject programs and its characteristics. Prog, Loc,Nodes,Edges,Pointers,Objects,Funs,CS represents programs, the number of lines of code, nodes in VFG, number of  edges in VFG, number of pointer variables , number of memory objects, number of functions, number of callsite.} 
		\label{tab:benchmarks}
		
		\resizebox{\linewidth}{!}{
			\begin{tabular}{c c c c c c c c}
				\toprule
				Prog & Loc & Nodes & Edges & Pointer & Object & Funs & CS \\
				\midrule
				janet  & 85274  & 134420 & 200093 & 176155 & 4300  & 1544 & 9676 \\
				zstd   & 95939  & 380797 & 558708 & 582876 & 49616 & 2348 & 24627 \\
				tmux   & 74322  & 231969 & 374530 & 207267 & 6215  & 2228 & 16524 \\
				astyle & 100424 & 281375 & 467675 & 321716 & 4845  & 712  & 20658 \\
				nginx  & 170660 & 171730 & 284550 & 170058 & 2568  & 1424 & 9206 \\
				sqlite & 525332 & 457787 & 756968 & 520887 & 8471  & 2566 & 28304 \\
				\bottomrule
			\end{tabular}
		}
		\footnotesize
		\raggedright 
	\end{table}%

\textbf{Subject Programs.} We evaluated IncSFS on a suite of six large-scale, real-world programs as shown in table~\ref{tab:benchmarks} characterized by frequent updates, which were selected based on their widespread adoption in prior flow-sensitive pointer analysis literature~\cite{9370334,10.1145/3725214}. 

\textbf{Dataset Construction.} To assess the impact of code change magnitude, we constructed datasets based on four distinct commit intervals: 1, 10, 20, and 30. An interval of $N$ represents the evolution between a baseline commit and a version $N$ commits forward. For each program and interval, we randomly sampled 10 distinct pairs of old and new versions from the repository history. \red{For every interval, we exclude  pairs with no source-code differences between the old and new versions in the process of random selection. } The reported performance for each interval is derived from the average analysis time of these 10 sampled pairs. Upon manual inspection, 81.67\% of our randomly sampled pairs involve modifications such as product releases, bug fixes, and code refactoring. 

\red{\textbf{Source-to-IR Mapping and IR Noise.} Since minor source edits may cause noisy whole-IR differences, Our current incremental approach have
	to also handle IR deletions followed by insertions. we map source changes directly to the corresponding deleted IR fragment $P_d$ in the old IR $I_o$ and inserted IR fragment $P_i$ in the new IR $I_n$. We measure the deletion time $T_d$ by removing $P_d$ from $I_o$. For insertion, we obtain an equivalent intermediate state by fully analyzing $I_n$, removing $P_i$, and then reinserting it to measure $T_i$. The reported incremental time is $T_d+T_i$, including only the transformation from the VFG to the constraint graph, SCC detection, and incremental propagation. The reported time of incremental analysis excludes the parts of LLVM IR generation, source-to-IR mapping, and incremental VFG maintenance which is a common prerequisite shared by all incremental algorithms evaluated.}

\textbf{Baselines.} \red{We
	implemented three incremental strategies: 
	Reset-Recompute, the adapted SILVA two-phase delete-then-insert
	propagation, and IncSFS's unified single-phase propagation.  To answer \textbf{RQ1}, we compare it against  the classical Whole-Program Analysis (Full)~\cite{5764696} since it is the standard implementation provided by SVF; the Reset-Recompute strategy, which reset and recomputes all nodes reachable from the changed VFG nodes/edges.} We explicitly exclude the \emph{Restart-Iteration} strategy from our evaluation since it inherently suffers from precision loss.  To answer \textbf{RQ2}, we compare IncSFS with SILVA~\cite{10.1145/3725214}, the state-of-the-art flow-insensitive incremental pointer analysis. For both approaches, we measure the runtime of delta computation and propagation on the constraint graph transformed from the VFG. \red{In this graph, all top-level and address-taken variables are converted into Full SSA form, allowing flow-sensitive pointer analysis to be solved flow-insensitively and SILVA to be directly applied.}
 To answer \textbf{RQ3}, \red{we evaluate representative small (\texttt{janet}), medium (\texttt{tmux}), and large (\texttt{sqlite}) programs against their initial versions, increasing the commit interval by 100 for \texttt{janet} and \texttt{tmux} and by 500 for \texttt{sqlite}, until incremental analysis approaches or exceeds full re-analysis.} To answer \textbf{RQ4}, we compare the points-to sets of all maintained top-level variable produced by IncSFS against those computed by the  full analysis across all experimental configurations. We also compute the average size of point-to set between IncSFS and the  full analysis.

\noindent\textbf{Experimental Environment.} All experiments were conducted on a high-performance workstation running Ubuntu 20.04 LTS. The workstation features an AMD Ryzen Threadripper PRO 7995WX processor with 128 CPU cores operating at 2.5\,GHz and 512\,GB of physical memory.
\vspace{-3mm}
\subsection{Experimental Results}
\subsubsection{Comparison with Full Analysis and Reset-Recompute}
\begin{table}[t]
	\scriptsize
	\caption{Comparison result with full analysis and reset-recompute, all times are measured in seconds. The  "Prog" column presents the program. The "I" column presents the commit interval. The "CNode" and "CEdge" columns present changed nodes and changed edges respectively. The "INode" presents the influenced nodes by the changed nodes and edges in reset-recompute strategy. The "RR" presents the reset-recompute strategy. The "BuildCG" presents the time spent on transforming VFG into constraint graph.}
	\label{table:comparefullReset}
	\resizebox{\linewidth}{!}{
\begin{tabular}{l c c c c c c c c}
	\toprule
	Prog & I & CNode & CEdge & Full Time & INodes & RR & BuildCG  & IncSFS \\
	\midrule
	
	\multirow{4}{*}{zstd} & 1 & 0.04\% & 0.00\% & 39.56 (4.16×) & 40.50\% & 29.82 (3.14×) & 4.32 (45.48\%) & 9.50 \\
	& 10 & 0.24\% & 0.01\% & 42.27 (4.02×) & 41.16\% & 30.17 (2.87×) & 3.85 (36.61\%) & 10.51 \\
	& 20 & 0.10\% & 0.01\% & 42.28 (3.85×) & 53.20\% & 38.13 (3.48×) & 5.18 (47.18\%) & 10.97 \\
	& 30 & 0.19\% & 0.04\% & 42.19 (2.38×) & 53.64\% & 39.34 (2.22×) & 5.80 (32.64\%) & 17.76 \\
	\midrule
	\multirow{4}{*}{janet} & 1 & 0.01\% & 0.04\% & 77.46 (9.00×) & 23.72\% & 32.26 (3.75×) & 5.95 (69.17\%) & 8.61 \\
	& 10 & 0.01\% & 0.03\% & 92.31 (4.26×) & 39.62\% & 68.63 (3.17×) & 12.34 (56.90\%) & 21.68 \\
	& 20 & 0.03\% & 0.06\% & 84.71 (5.48×) & 36.00\% & 58.67 (3.80×) & 11.10 (71.83\%) & 15.45 \\
	& 30 & 0.04\% & 0.09\% & 88.83 (2.43×) & 40.25\% & 63.55 (1.74×) & 11.07 (30.23\%) & 36.62 \\
	\midrule
	\multirow{4}{*}{tmux} & 1 & 0.02\% & 0.10\% & 475.84 (6.35×) & 64.62\% & 313.29 (4.18×) & 45.73 (61.04\%) & 74.92 \\
	& 10 & 0.08\% & 0.11\% & 457.38 (4.85×) & 71.41\% & 353.34 (3.75×) & 47.10 (49.99\%) & 94.22 \\
	& 20 & 0.16\% & 0.21\% & 448.74 (3.10×) & 64.99\% & 337.33 (2.33×) & 52.67 (36.39\%) & 144.74 \\
	& 30 & 0.17\% & 0.48\% & 476.62 (3.23×) & 72.32\% & 371.05 (2.52×) & 54.07 (36.65\%) & 147.52 \\
	\midrule
	\multirow{4}{*}{astyle} & 1 & 0.02\% & 0.02\% & 1271.98 (52.19×) & 20.46\% & 351.29 (14.41×) & 11.26 (46.21\%) & 24.37 \\
	& 10 & 0.13\% & 1.93\% & 1288.84 (5.77×) & 61.74\% & 1153.73 (5.16×) & 34.13 (15.27\%) & 223.54 \\
	& 20 & 0.27\% & 1.41\% & 1361.89 (7.07×) & 61.99\% & 1167.50 (6.06×) & 38.91 (20.21\%) & 192.59 \\
	& 30 & 0.29\% & 1.43\% & 1358.15 (7.63×) & 68.99\% & 1288.19 (7.23×) & 29.07 (16.32\%) & 178.07 \\
	\midrule
	\multirow{4}{*}{nginx} & 1 & 0.01\% & 0.01\% & 1387.76 (16.81×) & 68.35\% & 1113.79 (13.49×) & 56.51 (68.46\%) & 82.54 \\
	& 10 & 0.02\% & 0.01\% & 1463.75 (8.57×) & 75.93\% & 1246.75 (7.30×) & 110.09 (64.47\%) & 170.75 \\
	& 20 & 0.06\% & 0.05\% & 1525.32 (17.66×) & 75.96\% & 1329.46 (15.39×) & 43.20 (50.00\%) & 86.40 \\
	& 30 & 0.09\% & 0.06\% & 1529.43 (8.34×) & 75.97\% & 1345.46 (7.33×) & 113.23 (61.73\%) & 183.43 \\
	\midrule
	\multirow{4}{*}{sqlite} & 1 & 0.00\% & 0.01\% & 1670.40 (23.92×) & 30.24\% & 485.39 (6.95×) & 56.73 (81.23\%) & 69.83 \\
	& 10 & 0.01\% & 0.03\% & 1763.95 (11.25×) & 75.81\% & 1225.98 (7.82×) & 110.56 (70.53\%) & 156.76 \\
	& 20 & 0.06\% & 0.07\% & 1775.67 (8.76×) & 68.34\% & 1066.22 (5.26×) & 120.84 (59.61\%) & 202.72 \\
	& 30 & 0.01\% & 0.01\% & 1787.30 (9.42×) & 76.14\% & 1276.73 (6.73×) & 131.42 (69.23\%) & 189.83 \\
	\midrule
	\multicolumn{4}{l}{Avg} & 9.60×  & & 5.84× & 49.89\% & \\
	\bottomrule
\end{tabular}
}

\end{table}

Table~\ref{table:comparefullReset} details the comparison result of total analysis time of \textbf{IncSFS} with \textbf{FullAnalysis} and the \textbf{Reset-Recompute (RR)} strategy. Columns \textbf{CNodes} and \textbf{CEdges} directly reflect the extent of source code modifications, serving as a concrete indicator of the incremental update magnitude. \red{The exact old and new commit IDs and complete results for every sampled pair are archived with the artifact (Section~\ref{sec:data-availability}).}

\textbf{Performance Analysis.}
The \textbf{BuildCG} phase alone  accounts for \avgbuildcg\% of the total IncSFS time. This is primarily we transform the whole VFG to the constraint graph eagerly instead of lazily created if needed, which results in the huge node and edges of the constraint graph and the cost of the initial SCC detection.
Overall, IncSFS achieves a \textbf{\avgfull x} speedup over Full Analysis and a \textbf{\avgreset x} speedup over RR.
The inefficiency of RR is highlighted by the \textbf{INodes} column: simple code changes often impact 20\%--81\% of the VFG nodes due to transitive dependencies. Consequently, RR is forced to recompute the fix-point for a large portion of the program. Our approach only propagates deltas and reuse all old points-to result, resulting in fast reaching a fixed-point.

\subsubsection{Comparison with SILVA}

\begin{table}[t]
	\footnotesize
	\caption{Comparison result with SILVA. The ``Prog'' column presents the program. The ``I'' column presents the commit interval. The ``PNode'' presents the number of processed constraint nodes. The ``SCC'' presents the average number of SCC detections performed on the constraint graph.}
	\label{table:comparewithSILVA}
	\resizebox{\linewidth}{!}{
		\begin{tabular}{l c c c c c c c c}
			\toprule
			\multirow{2}{*}{Prog} & \multirow{2}{*}{I}
			& \multicolumn{3}{c}{IPA}
			& \multicolumn{1}{c}{\red{IncSFS$^{RE-}$}}
			& \multicolumn{3}{c}{IncSFS} \\
			\cmidrule(lr){3-5}
			\cmidrule(lr){6-6}
			\cmidrule(lr){7-9}
			& & Time (s) & PNodes & SCC
			& \red{Time (s)}
			& Time (s) & PNodes & SCC \\
			\midrule
			
			\multirow{4}{*}{janet}
			& 1  & 5.29 (49.8\%) & 200780.9 (47.1\%) & 1.7 (11.8\%) & \red{5.53}  & 2.65 & 106208.2 & 1.5 \\
			& 10 & 14.83 (37.0\%) & 736260.6 (49.7\%) & 18.4 (0.0\%) & \red{14.99} & 9.34 & 370634.4 & 18.4 \\
			& 20 & 9.49 (54.2\%) & 494669.7 (60.0\%) & 2.5 (8.0\%) & \red{10.52} & 4.35 & 197944.3 & 2.3 \\
			& 30 & 40.28 (36.6\%) & 1799976.6 (49.0\%) & 74.7 (0.7\%) & \red{47.02} & 25.55 & 917106.1 & 74.2 \\
			\midrule
			
			\multirow{4}{*}{tmux}
			& 1  & 29.05 (-0.5\%) & 2268396.8 (45.2\%) & 6.1 (3.3\%) & \red{31.22} & 29.19 & 1243195.6 & 5.9 \\
			& 10 & 49.00 (3.8\%) & 3546953.2 (38.9\%) & 123.3 (0.6\%) & \red{54.78} & 47.12 & 2166905.4 & 122.6 \\
			& 20 & 136.54 (32.6\%) & 9830775.9 (59.8\%) & 5.6 (16.1\%) & \red{155.37} & 92.08 & 3953844.5 & 4.7 \\
			& 30 & 134.79 (30.7\%) & 9957423.5 (59.0\%) & 18.4 (5.4\%) & \red{152.93} & 93.46 & 4084256.7 & 17.4 \\
			\midrule
			
			\multirow{4}{*}{zstd}
			& 1  & 4.50 (-15.0\%) & 513078.8 (30.0\%) & 18.0 (1.7\%) & \red{5.06} & 5.18 & 359341.1 & 17.7 \\
			& 10 & 7.28 (8.4\%) & 1264640.5 (57.6\%) & 79.3 (0.5\%) & \red{9.52} & 6.66 & 535809.9 & 78.9 \\
			& 20 & 7.02 (17.5\%) & 1425544.6 (64.4\%) & 20.7 (2.9\%) & \red{8.13} & 5.79 & 506880.3 & 20.1 \\
			& 30 & 13.81 (13.4\%) & 3059401.1 (60.3\%) & 25.5 (2.7\%) & \red{17.58} & 11.96 & 1214429.3 & 24.8 \\
			\midrule
			
			\multirow{4}{*}{astyle}
			& 1  & 19.81 (33.8\%) & 826874.2 (39.4\%) & 17.0 (0.6\%) & \red{20.34} & 13.11 & 500778.3 & 16.9 \\
			& 10 & 278.38 (32.0\%) & 12198190.9 (67.1\%) & 3118.8 (0.0\%) & \red{344.94} & 189.41 & 4013862.5 & 3118.0 \\
			& 20 & 271.89 (43.5\%) & 13051482.3 (75.0\%) & 2591.1 (0.0\%) & \red{362.17} & 153.68 & 3259865.7 & 2590.5 \\
			& 30 & 257.25 (42.1\%) & 12656285.0 (75.0\%) & 2725.7 (0.0\%) & \red{323.86} & 149.00 & 3161002.1 & 2725.0 \\
			\midrule
			
			\multirow{4}{*}{nginx}
			& 1  & 24.66 (-5.6\%) & 400144.5 (50.0\%) & 2.6 (19.2\%) & \red{30.14} & 26.03 & 199902.8 & 2.1 \\
			& 10 & 60.04 (-1.0\%) & 556158.6 (47.7\%) & 28.9 (2.4\%) & \red{71.94} & 60.66 & 290764.6 & 28.2 \\
			& 20 & 39.89 (-8.3\%) & 572395.2 (48.7\%) & 4.0 (12.5\%) & \red{46.91} & 43.19 & 293619.2 & 3.5 \\
			& 30 & 63.74 (-10.1\%) & 832000.8 (50.4\%) & 158.1 (0.5\%) & \red{75.73} & 70.20 & 412325.8 & 157.3 \\
			\midrule
			
			\multirow{4}{*}{sqlite}
			& 1  & 11.80 (-11.1\%) & 395411.8 (36.3\%) & 1.2 (0.0\%) & \red{12.85} & 13.11 & 251969.5 & 1.2 \\
			& 10 & 43.46 (-6.3\%) & 897227.7 (37.1\%) & 943.5 (0.0\%) & \red{48.33} & 46.19 & 563911.6 & 943.4 \\
			& 20 & 82.85 (1.2\%) & 1681167.0 (49.6\%) & 838.6 (0.0\%) & \red{101.51} & 81.88 & 847423.3 & 838.5 \\
			& 30 & 58.93 (0.9\%) & 1817195.9 (48.2\%) & 907.7 (0.0\%) & \red{67.53} & 58.41 & 940493.2 & 907.4 \\
			\midrule
			
			Avg.
			& & 15.8\% & 51.9\% & 3.7\%
			& \red{27.3\%}
			& --
			& --
			& -- \\
			\bottomrule
		\end{tabular}
	}
\end{table}

Table~\ref{table:comparewithSILVA} details the comparison between IncSFS and SILVA. 
We introduce two fine-grained metrics to dissect the performance gains: \textbf{PNodes}  and \textbf{SCC}.
The results indicate that IncSFS is algorithmically superior: it reduces SCC detections by an average of \textbf{\avgscc} and the number of propagated nodes by \textbf{\avgnodes}. Consequently, the overall analysis time is reduced by \textbf{\avgsilva}. The performance gain of IncSFS over SILVA primarily stems from the reduction in SCC detections. 
However, as observed in the results, the absolute reduction in the number of SCCs is not substantial.  Some of the avoided SCCs are small-sized components, which incurs minimal computational overhead on incremental detection, the time savings derived from skipping these operations are consequently moderate. Compared with SILVA, to guarantee that the increase $\Delta^+$ and decrease $\Delta^-$ remain \emph{mutually exclusive} during simultaneous propagation, IncSFS performs additional set difference operations (\emph{i.e.}, $\Delta^+ \setminus \Delta^-$).
In scenarios where updates are naturally exclusive which SILVA handles without extra cost, our method still incurs this calculation overhead, resulting in a slight increase in the average processing time per node.

\red{To further separate the effect of set difference and redundancy elimination, Table~\ref{table:comparewithSILVA} reports an ablation variant, \textbf{$IncSFS^{RE-}$}, which retains the set-difference filtering of IncSFS but disables redundant-propagation avoidance. Compared with $IncSFS^{RE-}$, IncSFS reduces the analysis time by 27.3\%, demonstrating the effectiveness in eliminating redundant propagation, repeated load/store-induced edge updates, and unnecessary SCC maintenance. On the other hand, comparing $IncSFS^{RE-}$ with IPA shows that set-difference filtering introduces an average overhead of 11.5\%.}

\subsubsection{Scalability}
\begin{figure}[t] %
	\centering
	\includegraphics[width=0.9\linewidth]{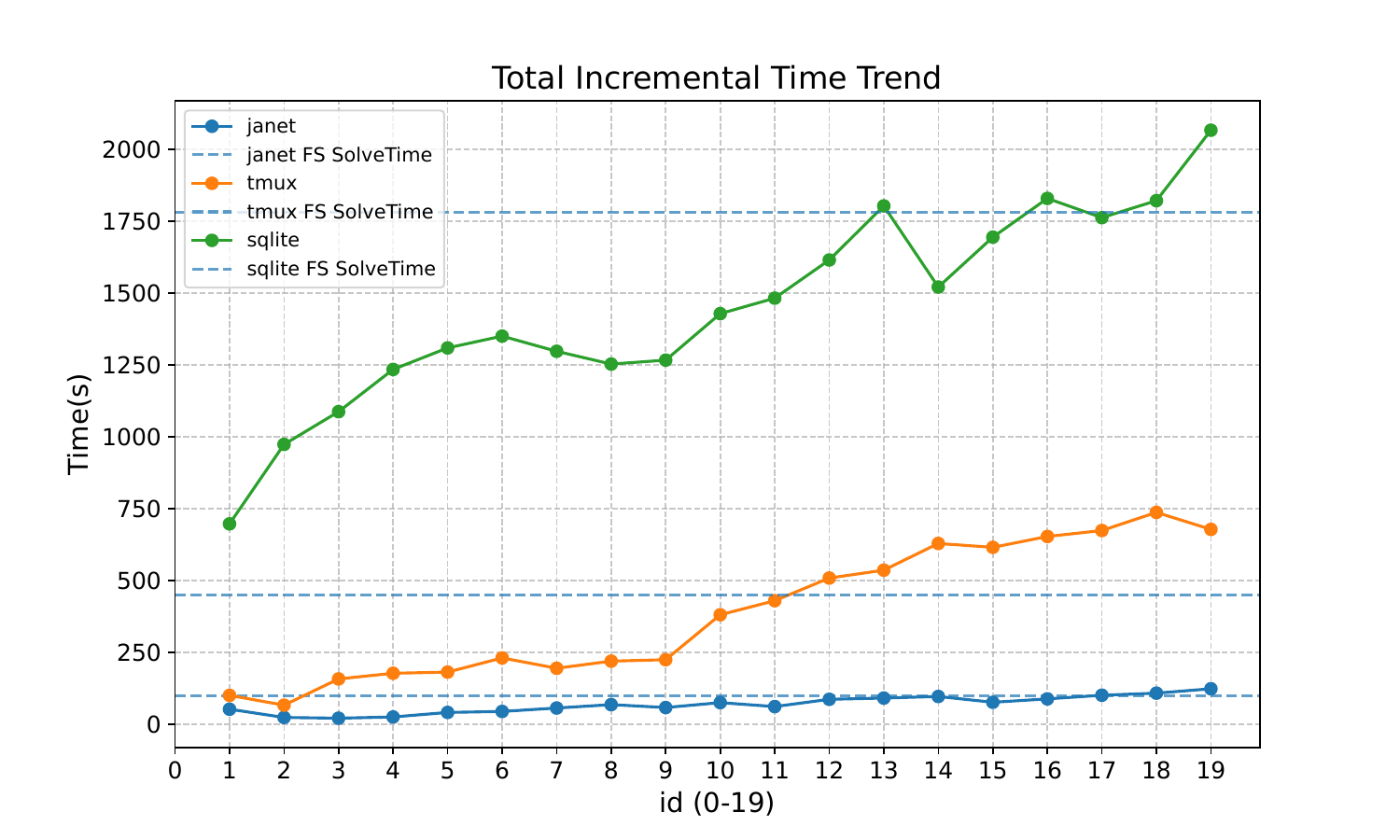}
	\caption{\red{Running with linearly increasing number of commits.}}
	\label{fig:trend}
\end{figure}
\red{Figure~\ref{fig:trend} studies when the incremental overhead exceeds the cost of full re-analysis. The dashed line represents the runtime of full analysis for each program. The results show that IncSFS remains faster than full re-analysis across most modification scales. For insertion-only changes, incremental analysis is always faster than full re-analysis in our experiments. Therefore,  we separately measure the ratios of deleted code and inserted code relative to the original codebase. For \texttt{janet}, \texttt{tmux}, and \texttt{sqlite}, the incremental overhead exceeds full re-analysis only under modification: deleted code accounts for 67.11\%, 33.82\%, and 6.95\% of the original source code size, respectively, while inserted code reaches 150.92\%, 68.67\%, and 5.66\% of the original source code size. These results suggest that the crossover point is mainly caused by large mixed deletion/insertion updates rather than by insertions alone.}

\subsubsection{Correctness Verification}
\begin{wrapfigure}{r}{0.36\linewidth} %
	\centering
	\includegraphics[width=\linewidth]{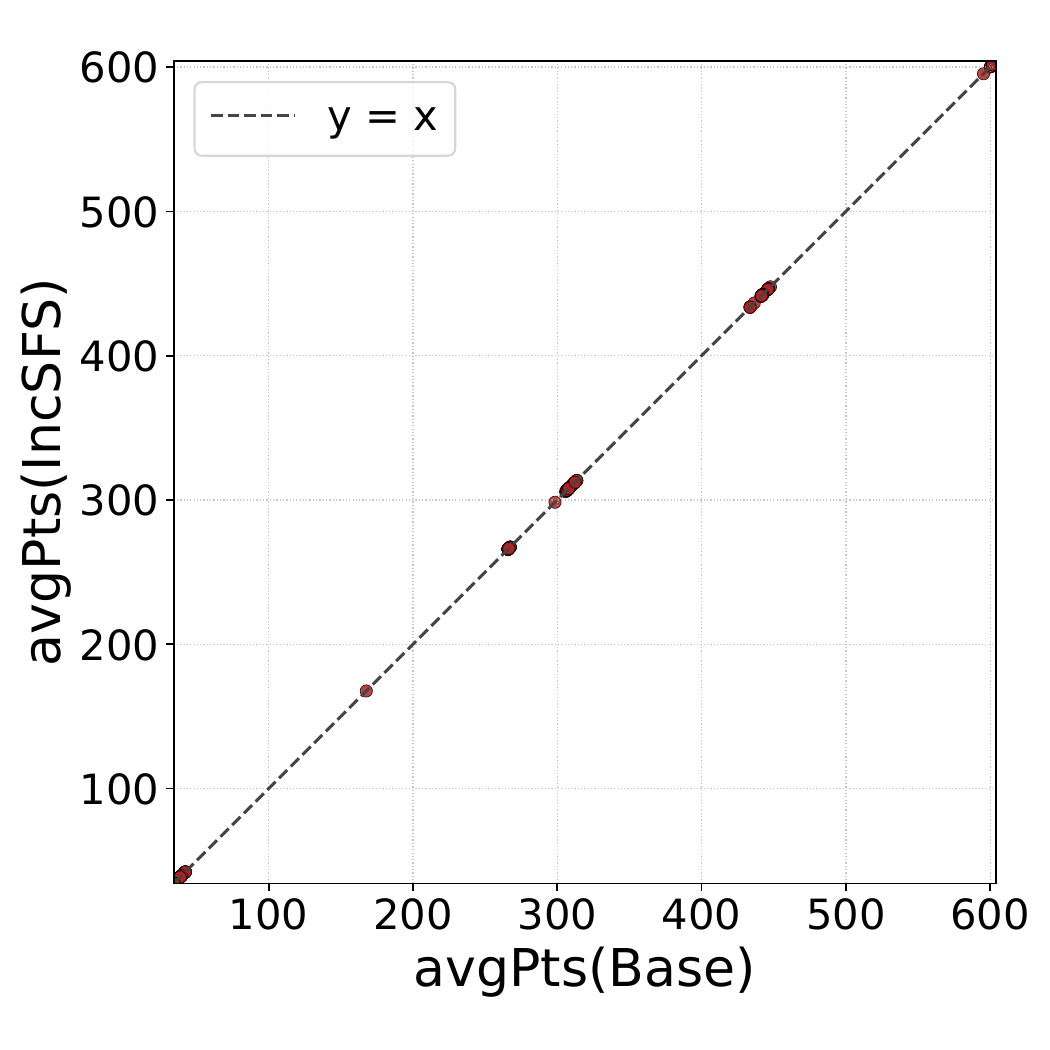}
	\vspace{-5mm} 
	\caption{Avg pts-to set size (IncSFS vs Base).}
	\label{fig:scatter}
\end{wrapfigure}
Figure~\ref{fig:scatter} visualizes the average point-to set size of variables between our incremental approach and the baseline.
As observed, all data points fall precisely on the line $y=x$, which demonstrates that IncSFS produces identical analysis results to the baseline across all tested benchmarks, confirming that our incremental updates  preserve the strict flow-sensitivity of the original analysis. 

\begin{table}[t]
	\centering
	\footnotesize
	\caption{\red{Equivalence between IncSFS and full analysis.}}
	\label{tab:equivalence-counts}
	\resizebox{\linewidth}{!}{
		\begin{tabular}{l r r r}
			\toprule
			\red{Prog} & \red{Version pairs} & \red{Compared variables} & \red{Mismatches} \\
			\midrule
			\red{janet}  & \red{40} & \red{1,766,214} & \red{0} \\
			\red{nginx}  & \red{40} & \red{2,653,743} & \red{0} \\
			\red{tmux}   & \red{40} & \red{2,947,271} & \red{0} \\
			\red{astyle} & \red{40} & \red{2,531,033} & \red{0} \\
			\red{sqlite} & \red{40} & \red{5,582,440} & \red{0} \\
			\red{zstd}   & \red{40} & \red{8,142,750} & \red{0} \\
			\midrule
			\red{Total} & \red{240} & \red{23,623,451} & \red{0} \\
			\bottomrule
		\end{tabular}
	}
\end{table}

\red{Table~\ref{tab:equivalence-counts} reports the number of variables whose points-to sets are compared. Across 23,623,451 variables in 240 configurations, IncSFS produces zero mismatches with full analysis.}

\subsection{Threats to Validity}
\red{We checked for object points-to cycles in \texttt{janet}, \texttt{tmux}, and \texttt{zstd}, but omitted the other programs due to prohibitive memory costs of object point-to cycle detection. Such cycles do occur in most of the checked programs. However, these cycles either fell outside the affected regions or, when cycle-related facts were removed by deletions, still retained independent non-cyclic support. Consequently, no cycle caused IncSFS to reach a non-minimal fixed point in any evaluated configuration, consistent with our exact comparison against full analysis.}

\section{Discussion}

\textbf{Memory usage.} Our rename approach inherently introduces extra constraint nodes resulting in a maximum 3.8$\times$ memory overhead.  By collapsing SCC nodes into a single representative, we significantly reduce computational costs compared to vanilla SFS. This space-for-time trade-off is indispensable for time-cost flow-sensitive analyses. Moreover memory issues  is a problem also faced by other incremental analysis efforts~\cite{10.1145/1069774.1069785,10.1145/3725214} to analyze larger-scale programs. We will investigate reductions in redundant renamed nodes in future work.

\red{\textbf{Downstream client impact.} We have not yet integrated IncSFS into downstream clients such as bug detection. To estimate the potential impact, we use SVF's flow-sensitive pointer-analysis results to construct the value-flow graph and then run the SABER defect detector. In this pipeline, flow-sensitive pointer analysis accounts for 47\% of the total analysis time on average across the benchmark programs evaluated in our experiments. Combining this fraction with IncSFS's average \avgfull x speedup over full analysis, providing an expected end-to-end speedup of 1.73x $(1/((1-0.47)+0.47/\avgfull))$ for the full downstream analysis pipeline.}

\vspace{-2mm}
\section{Related Work}

\textbf{Incremental Program Analysis.}
Incremental analysis has been extensively studied across   dataflow analysis\cite{10.1145/3786763, 10.1145/3720436},abstract interpretation \cite{10.1145/3453483.3454044,10.1145/3648441}, program testing \cite{10.1145/3728883} and regression testing~\cite{8987498}.  Reset-recompute is a widely used strategy in incremental analysis.   Yur \emph{et al.} \cite{841034} introduced the first incremental flow- and context-sensitive pointer analysis using this strategy. However, these methods incur redundant computations limiting the efficiency gain for large-scale pointers analysis as shown in our experiment results.   Lu \emph{et al.} \cite{10.1007/978-3-642-37051-9_4} proposed an incremental flow-insensitive pointer analysis based on CFL-reachability, which incrementally answer queries spanning from variables to corresponding objects.  Krainz \emph{et al.}~\cite{10.1145/3098572.3098578} represents the accesses of a method in a flow-sensitive way, but only handles code insertion.  There are other incremental pointer analysis methods ~\cite{10.1145/3276509,10.1145/3453483.3454026} based on Datalog but hard to leverage the change-local properties based on constraint graph.

To overcome the overhead of recomputation, \textsc{IPA} \cite{10.1145/3293606,10.1145/3725214,10.1145/3527332} proposed efficient incremental algorithms  by propagating deltas in points-to set.  However, these approaches primarily target \emph{flow-insensitive} analysis and fail to exploit the correlation between added and deleted statements~\cite{10.1145/3293606}.  In contrast, our approach targets \emph{flow-sensitive} analysis and exploit a unified way.

\textbf{Flow-Sensitive Pointer Analysis.}
In last decades, various optimizations have been proposed to improve performance of flow-sensitive pointer analysis. Li \emph{et al.} \cite{10.1145/2025113.2025160} converts the flow-sensitive pointer analysis into a graph reachability problem on a value-flow graph. Ben \emph{et al.} \cite{10.1145/1594834.1480911, 5764696} introduced \emph{sparse} flow-sensitive pointer analysis.
\red{Barbar  \emph{et al.} \cite{9370334} merge equivalent points-to sets via object versioning ,  thereby reducing the representational redundancy of classical SFS. IncSFS   dynamically collapses SCCs into representative nodes during incremental solving.} \red{The work proposed by Zhang \emph{et al.}~\cite{ZHANG2026113027} also transforms value-flow information into a constraint graph and improves the efficiency of flow-sensitive pointer analysis in a whole-program setting. Independently and concurrently with their work, our work performs variable renaming in the incremental setting so that points-to deltas can be propagated safely after code changes. }%

\vspace{-2mm}
\section{Conclusion}
We propose the first incremental fully sparse flow-sensitive pointer analysis algorithm by transforming the value-flow graph into a constraint graph and performing SCC detection on that graph.  Furthermore, our method simultaneously propagates both increased and decreased points-to sets, improving analysis efficiency. \red{Under the object-acyclic condition: points-to relation do not formulate a cycle, IncSFS is guaranteed to terminate and compute the least fixed point.} Experiments on six large real-world programs show substantial improvements over both full analysis and the reset--recompute incremental algorithm. In addition, our method delivers higher efficiency than the state-of-the-art incremental pointer analysis algorithm.

\section{Data Availability Statement}
\label{sec:data-availability}
The artifacts and supplementary material are available at \url{https://doi.org/10.5281/zenodo.21771594}~\cite{kunlin_2026_21771594}.

\begin{acks}
This research was supported by the National Key R\&D Program of China (No.~2022YFB4501903) and the NSFC Program (Nos.~62172429 and 62032024).
\end{acks}

\newcounter{combinedbibitem}
\pretocmd{\bibitem}{%
	\stepcounter{combinedbibitem}%
	\ifnum\value{combinedbibitem}=37\balance\fi
}{}{}
\bibliographystyle{ACM-Reference-Format}
\bibliography{sample-base}

\clearpage
\appendix
\section*{Supplementary Material}
This supplement provides complete proofs of termination, constraint consistency, and least-fixed-point convergence. It accompanies the camera-ready paper and is archived with the artifact on \href{https://doi.org/10.5281/zenodo.21771594}{Zenodo}.

\section{Proof of Termination}
\label{sec:supp-termination}

\begin{lemma}
\label{lemma:supp-mon}
	Changes to copy-constraint edges are monotonic with respect to the corresponding points-to-set changes: shrinking a points-to set can only delete edges, whereas expanding it can only insert edges.
\end{lemma}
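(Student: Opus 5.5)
The plan is to make the statement precise as a claim about the map from points-to sets to the set of derived copy-constraint edges. Let $\mathcal{E}(\pts)$ denote the edges generated by the rules of Figure~\ref{fig:TransVFGToCG} for a fixed VFG $G'$, where $\pts$ is the current points-to assignment. I will show that $\pts\subseteq\pts'$ (pointwise) implies $\mathcal{E}(\pts)\subseteq\mathcal{E}(\pts')$, restricted to the edges whose membership is decided by the points-to sets of the base pointers of loads and stores. The lemma then follows directly. A reduction $\pts'\to\pts$ can only drop edges, which is exactly what \textsc{P-Load} and \textsc{P-Store} put in $\mathcal{E}^-$. An expansion can only add edges, which is what goes into $\mathcal{E}^+$. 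Copy, phi, field and alloc edges do not depend on $\pts$ at all, so they are unaffected.

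The argument proceeds by cases on statement type.

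\textbf{Load $\ell:p=*q$.} The edge $p\leftarrow o_{\ell'}$ is present iff $o\in\pts(q)$ and $\ell'\stackrel{o}{\rightsquigarrow}\ell$. This condition is monotone in $\pts(q)$, so the case is immediate.

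\textbf{Store $\ell:*p=q$, direct edges.} The edge $o_\ell\leftarrow q$ is present iff $o\in\pts(p)$. This is again monotone.

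\textbf{Store $\ell:*p=q$, bypass edges $o_\ell\leftarrow o_{\ell'}$.} Presence is determined by the update mode of the store:
\begin{itemize}
\item no-op when $|\pts(p)|=0$: no bypass edges;
\item strong update when $|\pts(p)|=1$: bypass edges for every $o$ except the unique target;
\item weak update when $|\pts(p)|>1$: bypass edges for every $o$.
\end{itemize}
I will check each transition as $\pts(p)$ grows.
\begin{itemize}
\item From $0$ to $1$: the bypass set goes from $\emptyset$ to all-but-one.
\item From $1$ to $>1$: it goes from all-but-one to all.
\item Within $>1$: it stays at all.
\end{itemize}
Each step enlarges or preserves the edge set, and the reverse transitions shrink it.

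The main obstacle is the strong-update case when $\pts(p)$ changes from $\{o\}$ to a different singleton $\{o'\}$. Viewed as a net change this is not monotone: the bypass for $o$ appears and the bypass for $o'$ disappears. I would handle this by decomposing the change into its reduction part $\Delta^-_p$ and its expansion part $\Delta^+_p$, which the \textsc{Filter} rule keeps disjoint, and attributing each edge change to one of them. The inserted edge $o_\ell\leftarrow o_{\ell'}$ is caused by losing $o$, and the deleted edge $o'_\ell\leftarrow o'_{\ell'}$ is caused by gaining $o'$. So the claim as literally stated fails for the kill edge. I would therefore state and prove it relative to the weak-to-strong ordering only, or treat the passage through $\emptyset$ or $\{o,o'\}$ as the intermediate step. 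Concretely, I would order the changes so that the expansion $\{o\}\to\{o,o'\}$ (a weak update, which only inserts edges) is applied first, followed by the reduction $\{o,o'\}\to\{o'\}$, which only deletes edges. This makes each elementary step monotone, matching Lemma~\ref{lemma:mon}.

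For the second half of the statement, the remaining step is to observe how edge changes affect points-to sets. On the fixed-point equations, an inserted edge contributes only $\Delta^+$ through \textsc{P-InsCopy}, and a deleted edge contributes only $\Delta^-$ through \textsc{P-DelCopy}. The least solution of a union-of-predecessors system is monotone in the edge set, so a smaller edge set gives a smaller solution and a larger edge set gives a larger one.
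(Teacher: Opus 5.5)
Your proof is correct and is essentially the paper's argument: the same case split over loads, direct store edges $o_\ell\leftarrow q$, and bypass edges across the no-op/strong/weak transitions, stated more cleanly as monotonicity of the edge-generation map under pointwise inclusion. Your treatment of the singleton swap $\{o\}\to\{o'\}$ goes beyond the paper, which only considers pure shrinks and pure expansions. Strictly, the swap lies outside the lemma rather than refuting it, but when $\Delta^-_p$ and $\Delta^+_p$ are applied together, your expansion-then-reduction decomposition is the right way to attribute the resulting edge changes. Your last paragraph, on how edge changes affect points-to sets, is not needed here; the paper proves that direction as a separate lemma.
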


\begin{proof}
	
	We consider shrinking and expanding $\pts(p)$.
	
	\begin{itemize}
		
		\item \textbf{Shrinkage.}
		
		For a store $\ell:*p=q$, removing objects from $\pts(p)$ deletes the corresponding target edges. A transition from a weak to a strong update additionally deletes the bypass edges for the unique target, while a transition to a no-op removes all edges induced by the store. No new edge is introduced.
		
		For a load $\ell:p=*q$, removing $o$ from $\pts(q)$ deletes every induced edge $o_{\ell^\prime}\rightarrow p$ with $\ell^\prime\xrightarrow{o}\ell$.
		
		\item \textbf{Expansion.}
		For a store $\ell:*p=q$, adding objects to $\pts(p)$ inserts the corresponding target edges. A transition from a strong to a weak update restores bypass edges, while a transition from a no-op introduces the edges required by the resulting strong or weak update. No edge is deleted.
		
		For a load $\ell:p=*q$, adding $o$ to $\pts(q)$ inserts every induced edge $o_{\ell^\prime}\rightarrow p$ with $\ell^\prime\xrightarrow{o}\ell$.
		
	\end{itemize}
\end{proof}

\begin{lemma}
	The deletion of copy constraint edges can only lead to a reduction of points-to sets, while the addition of copy constraint edges can only lead to an expansion of points-to sets.
\end{lemma}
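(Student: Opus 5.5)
The plan is to argue directly from the semantics of the constraint graph as a system of subset constraints, and then to check that the operational rules \textsc{P-DelCopy}, \textsc{P-InsCopy}, \textsc{Filter} and \textsc{Propagate} realise exactly this semantics.

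The semantic core is a monotonicity argument for least fixed points. For a fixed edge set $\mathcal{E}$ and fixed base facts from \textsc{Alloc}, the induced points-to assignment $\pts_{\mathcal{E}}$ is the least solution of the inclusions $\pts(q)\subseteq\pts(p)$ for every $p\leftarrow q\in\mathcal{E}$. Field edges are treated analogously, since their transfer function is monotone in the source set. Concretely, $\pts_{\mathcal{E}}$ is the least fixed point of a monotone operator $F_{\mathcal{E}}$ on the lattice $\mathcal{V}\rightarrow 2^{\mathcal{O}}$, and $F_{\mathcal{E}}$ is pointwise monotone in $\mathcal{E}$. So if $\mathcal{E}'\subseteq\mathcal{E}$, then $F_{\mathcal{E}'}\sqsubseteq F_{\mathcal{E}}$, and by Knaster--Tarski (or by Kleene iteration from $\bot$) we get $\mathrm{lfp}\,F_{\mathcal{E}'}\sqsubseteq\mathrm{lfp}\,F_{\mathcal{E}}$. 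Deletion of copy edges can therefore only shrink points-to sets, and insertion can only enlarge them.

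The operational part connects this to the deltas. I would show, by induction over the topological order $\mathcal{N}_T$ of the collapsed graph, that a pure deletion round yields $\Delta^+_v=\emptyset$ for every node $v$, and a pure insertion round yields $\Delta^-_v=\emptyset$.

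1. \emph{Seeding.} \textsc{P-DelCopy} seeds only $\Delta^-$ and \textsc{P-InsCopy} seeds only $\Delta^+$. The \textsc{DelAlloc}/\textsc{InsAlloc} rules are likewise one-signed.
2. \emph{Filtering.} \textsc{Filter} only removes elements from the deltas, via $\setminus\pts(q)$, $\setminus\pts(p)$, $\setminus\Delta^-_p$ and $\cap\,\pts(p)$. It therefore never turns an empty delta into a nonempty one.
3. \emph{Propagation.} \textsc{Propagate} unions $\Delta^\pm_p$ into $\Delta^\pm_s$ sign by sign, so each sign is preserved.
4. \emph{Conclusion.} Consequently the update $\pts(p)\leftarrow\pts(p)\setminus\Delta^-_p\cup\Delta^+_p$ is a pure removal in a deletion round and a pure addition in an insertion round.

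The main obstacle is SCC collapsing. After a deletion, a representative node splits, and its members inherit the old set of the collapsed node. I must argue that this inheritance, followed by filtering against the surviving predecessors, still only removes objects. This holds because every member starts from the old common set and only $\Delta^-$ is ever applied to it.

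For rounds that mix both signs, I would combine the argument above with Lemma~\ref{lemma:supp-mon}. A deletion-induced reduction triggers only further deletions, and an insertion-induced expansion triggers only further insertions. The net effect of each kind of edge change thus stays one-signed, and the mutual exclusivity of $\Delta^+_p$ and $\Delta^-_p$ enforced by \textsc{Filter} prevents an object from being attributed to both signs at once.
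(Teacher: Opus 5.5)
Your proposal is correct, and its operational half is essentially the paper's proof. The paper only observes three things: a deleted edge seeds only $\Delta^-$; predecessor filtering leaves a negative delta that can only remove objects; and by Lemma~\ref{lemma:supp-mon} the induced \textsc{P-Load}/\textsc{P-Store} side effects are again deletions (dually for insertions). Your induction over $\mathcal{N}_T$ and your remark on SCC splitting make this explicit.

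One organisational point: you need Lemma~\ref{lemma:supp-mon} already in the pure case, not only for mixed rounds. It is what guarantees that the $\mathcal{E}^-$ produced by a pure deletion round starts another pure deletion round, so one-signedness persists across iterations of the main loop.

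The Knaster--Tarski half is a different route that the paper does not take. It gives an algorithm-independent fact about the least solutions of the old and new constraint systems. It says nothing, however, about the deltas IncSFS actually applies. The paper shows the algorithm's result is the least solution only later, and only under object-acyclicity.

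Moreover, holding $\mathcal{E}$ fixed ignores that load/store-derived edges are a function $D(\pts)$ of the current points-to sets. Extending your semantic argument to them requires $D$ to be monotone in $\pts$. That is exactly Lemma~\ref{lemma:supp-mon}, which in turn rests on treating stores with $|\pts(p)|=0$ as no-ops.

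The termination proof uses the lemma operationally: deletions trigger only $\Delta^-$ and further deletions. For that use, your operational half is what carries the weight.
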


\begin{proof}
	Deleting an edge $p\rightarrow q$ seeds $\Delta^-_q$ with $\pts(p)$. After predecessor filtering, the remaining negative delta can only remove objects from $\pts(q)$; by Lemma~\ref{lemma:supp-mon}, the resulting load/store side effects are further edge deletions. Conversely, inserting $p\rightarrow q$ seeds $\Delta^+_q$ with $\pts(p)$, which can only enlarge $\pts(q)$ and induce further edge insertions.
\end{proof}

\begin{definition}[Object points-to cycle]
\label{def:supp-object-pts-cycle}
For object nodes $o,o'$, write $o\leadsto_{\mathrm{pt}}o'$ iff $o'\in\pts(o)$. An object points-to cycle is a nonempty sequence $o_1,\ldots,o_n$ such that $o_i\leadsto_{\mathrm{pt}}o_{i+1}$ for $1\le i<n$ and $o_n\leadsto_{\mathrm{pt}}o_1$. A points-to relation is \emph{object-acyclic} iff it contains no such cycle.
\end{definition}

\begin{theorem}
If object-acyclicity is preserved throughout the analysis, the interleaved algorithm terminates.
\end{theorem}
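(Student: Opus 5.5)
The argument splits termination into two parts. First, each round of the main loop (Lines~\ref{alg:mainstart}--\ref{alg:mainend} of Algorithm~\ref{alg:getChangeMethod}) terminates. Second, only finitely many rounds occur.

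\emph{A single round terminates.} The constraint graph is finite: its nodes are top-level variables plus renamed objects $o_\ell$. \textsc{FilterAndPropagate} visits the SCC-collapsed DAG once in the topological order $\mathcal{N}_T$. Each node's deltas are finite subsets of $\mathcal{O}$. The load/store side effects (\textsc{P-Load}, \textsc{P-Store}) only add edges to the next round's $\mathcal{E}^+,\mathcal{E}^-$ and do not re-enter the current traversal.

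\emph{Finitely many rounds.} The state space $S_k=\langle G_k,\mathcal{E}^-_k,\mathcal{E}^+_k\rangle$ is finite, because the edge sets and points-to sets are drawn from a finite universe. The algorithm is deterministic, so non-termination forces $S_i=S_j$ for some $i<j$ with a nonempty change set in between. On this cycle of states, every edge deleted in rounds $i,\dots,j-1$ must be reinserted, and vice versa. Since \textsc{Filter} enforces $\Delta^+_p\cap\Delta^-_p=\emptyset$, the same edge is never both deleted and inserted in one round. Hence $D^-_{i,j}=D^+_{i,j}\neq\emptyset$.

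Next, I use Lemma~\ref{lemma:supp-mon} together with the following lemma (edge deletions cause only reductions, insertions only expansions) to give every derived deletion a cause. A deletion $e'$ made by \textsc{P-Load} or \textsc{P-Store} in round $m+1$ stems from a reduction $\Delta^-$ of some base pointer in round $m$. That reduction, after filtering, traces back along copy edges to an earlier edge deletion $e$. Write $e\Rightarrow e'$. Base deletions from the initial VFG diff happen only in round $0$, so they cannot belong to a repeating segment once $i\ge1$. Every deletion in $D^-_{i,j}$ therefore has a $\Rightarrow$-predecessor within the same periodic segment, which can be unrolled across periods. Finiteness of $D^-_{i,j}$ then yields a cycle $e_1\Rightarrow\cdots\Rightarrow e_n\Rightarrow e_1$.

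From this causal cycle I extract objects. Each derived deletion $e_r$ is induced at a load or store whose base pointer loses an object $o_r$. The link $e_r\Rightarrow e_{r+1}$ means the reduction flows out of some renamed $o_{r,\ell}$ and reaches the base pointer of the next load/store, removing $o_{r+1}$. That value came through the points-to set of $o_r$, so $o_{r+1}\in\pts(o_r)$. Closing the cycle gives $o_1\leadsto_{\mathrm{pt}}\cdots\leadsto_{\mathrm{pt}}o_n\leadsto_{\mathrm{pt}}o_1$, which contradicts the object-acyclicity hypothesis (Definition~\ref{def:supp-object-pts-cycle}).

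The main obstacle is making $e\Rightarrow e'$ and the extraction $o_{r+1}\in\pts(o_r)$ rigorous. Two difficulties arise. First, a reduction may reach a base pointer through pure top-level copy chains without passing through an object. Second, it must be checked at the relevant moment rather than in a stale state. My approach is to define $\Rightarrow$ only between derived (load/store) deletions and to compress intervening copy-edge propagation into the chain. Every step of the chain that crosses from one deletion to another must pass through a store target $o_\ell\leftarrow q$ and then a load $p=*q'$ reading $o_{\ell'}$. I would pick the witness $o_{r+1}$ at the state where it is removed, where it still belongs to $\pts(o_r)$; acyclicity is assumed to hold at every state.
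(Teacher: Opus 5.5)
Your skeleton is the paper's own proof: finite round states, a repeated state $S_i=S_j$, $D^-_{i,j}=D^+_{i,j}\neq\emptyset$, a causal cycle of derived deletions, then an object cycle. Your additions (per-round termination, determinism to get periodicity, discarding round-$0$ seeds) are real improvements. However, the step you flag as the main obstacle is where the argument fails, and the paper asserts that same step without justification.

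Your proposed repair, that every link $e_r\Rightarrow e_{r+1}$ passes through a store target edge $o_\ell\leftarrow q$ and then a load, is false. As Lemma~\ref{lemma:supp-mon} itself records, \textsc{P-Store} also deletes \emph{bypass} edges $o_\ell\leftarrow o_{\ell'}$ when $\pts(p)$ at $\ell:*p=q$ shrinks to a singleton $\{o\}$ (weak to strong). It does the same when $\pts(p)$ shrinks to $\emptyset$ (no-op), where bypass edges of every object reaching $\ell$ are deleted. In the weak-to-strong case the base pointer loses some $o'\neq o$, yet the reduction leaving the store flows out of $o_\ell$ and removes elements of $\pts(o)$. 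Along such a link you only get $o_{r+1}\in\pts(o)$ for an object $o$ other than the lost $o_r$, not $o_{r+1}\in\pts(o_r)$. So the $\leadsto_{\mathrm{pt}}$ chain breaks.

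This is not a presentational hole: object-acyclicity does not rule out such causal cycles. Consider a loop whose body behaves as follows:
\begin{itemize}
\item it strongly stores $x_2$ into $y_1$ and $x_1$ into $y_2$;
\item it executes $\ell_1:*p_1=q_1$ and $\ell_2:*p_2=q_2$ with $\pts(q_1)=\pts(q_2)=\emptyset$;
\item it loads $r_1$ from $y_1$ and $r_2$ from $y_2$;
\item its header has phis $p_1=\phi(c_1,r_2)$ and $p_2=\phi(c_2,r_1)$, with $\pts(c_i)=\{y_i\}$.
\end{itemize}
The bypass edge $b_1$ for $y_1$ at $\ell_1$ exists exactly when $x_1\in\pts(p_1)$. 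That fact is supported only by $b_2$ (via $y_2\ni x_1$ and $r_2$). In turn $b_2$ exists exactly when $x_2\in\pts(p_2)$, which is supported only by $b_1$ (via $y_1\ni x_2$ and $r_1$). Yet the only object points-to facts are $y_1\leadsto_{\mathrm{pt}}x_2$ and $y_2\leadsto_{\mathrm{pt}}x_1$, which are acyclic.

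Start from the old least fixed point, in which an extra copy edge feeds $x_1$ into $p_1$ and $p_2\leftarrow r_1$ is absent. Apply an edit that deletes the former edge and inserts the latter.
\begin{itemize}
\item Round~1 removes $x_1$ from $p_1$ and adds $x_2$ to $p_2$, queuing deletion of $b_1$ and insertion of $b_2$.
\item Round~2 applies both changes and, through \textsc{P-Store}, queues the reverse pair.
\item The state after round~3 equals the state after round~1.
\end{itemize}
So the rules as written never terminate here. To close your argument you need a stronger hypothesis. One option is acyclicity of a dependence relation that also links $o'$ to $o$ whenever losing $o'$ from a store's base pointer can delete a bypass edge of $o$. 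Another is an assumption that no weak/strong/no-op switch occurs inside the repeating segment.
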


\begin{proof}
	We prove termination by contradiction. Assume that the algorithm does not terminate.
	Then there must exist an infinite sequence of insertions and deletions of some constraint edge $x \rightarrow y$. 
	Indeed, if every edge were modified only finitely many times, then the algorithm would eventually reach a fixed point and terminate.
	We analyze all possible cases.
	
	\paragraph{Case 1: alternating global addition and deletion.}
	
	Assume that one iteration performs only edge insertions, while the next iteration performs only edge deletions.
	This situation is impossible.
	In our framework, inserting constraint edges can only enlarge points-to sets. 
	Moreover, enlarging points-to sets can only trigger additional edge insertions, rather than deletions, by Lemma~\ref{lemma:supp-mon}.
	Therefore, an iteration consisting solely of deletions cannot immediately follow an iteration consisting solely of insertions. Hence, infinite oscillation cannot arise from globally alternating insertion-only and deletion-only iterations.
	
	\paragraph{Case 2: simultaneous insertion and deletion within one iteration.}
	
	The remaining possibility is that, within the same iteration, some edges are inserted while others are deleted.
	Let
	\[
	S_k=\langle G_k,\mathcal{E}_k^-,\mathcal{E}_k^+\rangle
	\]
	be the state after round $k$. Here, $G_k$ is the constraint graph annotated with the current points-to set of every node, while $\mathcal{E}_k^-$ and $\mathcal{E}_k^+$ are the pending edge deletions and insertions derived by \textsc{P-Load} and \textsc{P-Store}. Since the program contains finitely many variables, abstract objects, and possible constraint edges, the set of states is finite.

	Assume, for contradiction, that the algorithm does not terminate. An infinite execution over this finite state space contains two equal round-end states: $S_i=S_j$ for some $i<j$. Define
	\[
	D_{i,j}^- = \bigcup_{i\le m<j}\mathcal{E}_m^- ,
	\qquad
	D_{i,j}^+ = \bigcup_{i\le m<j}\mathcal{E}_m^+ .
	\]
	The filtering rules enforce $\mathcal{E}_m^-\cap\mathcal{E}_m^+=\emptyset$ in every round. Therefore, an edge cannot be deleted and reinserted in the same round. Because $S_i=S_j$, every edge deleted between rounds $i$ and $j$ must be reinserted in a later round before $j$, and every inserted edge must similarly be deleted. Hence,
	\[
	D_{i,j}^-=D_{i,j}^+\neq\emptyset.
	\]
	If this set were empty, no derived edge or points-to set would change after $S_i$, and the algorithm would already have terminated.

	Let $e\Rightarrow e'$ mean that the points-to reduction caused by derived-edge deletion $e$ triggers deletion $e'$ through \textsc{P-Load} or \textsc{P-Store}. Since every update in $D_{i,j}^-$ must be reversed before $S_j$, following these causes yields a cycle $e_1\Rightarrow e_2\Rightarrow\cdots\Rightarrow e_n\Rightarrow e_1$. Let $e_r(1\leq r\leq n)$ be induced when the store $*p_r=a_r$ loses target object $o_r$. If $e_r$ causes $e_{r+1}$, the reduction propagated through $o_r$ removes $o_{r+1}$ at the next store; thus $o_{r+1}\in\pts(o_r)$, with indices modulo $n$. Hence,
	$
	o_2\in\pts(o_1),\ o_3\in\pts(o_2),\ldots,\ o_1\in\pts(o_n),
	$
	which gives $o_1\leadsto_{\mathrm{pt}}o_2\leadsto_{\mathrm{pt}}\cdots\leadsto_{\mathrm{pt}}o_n\leadsto_{\mathrm{pt}}o_1$.
	This contradicts the acyclicity assumption. Therefore, the algorithm terminates.
\end{proof}

\section{Proof of Consistency and Least-Fixed-Point Convergence}
\label{sec:supp-consistency}
\begin{theorem}[Constraint consistency]
	\label{thm:constraint-consistency}
	Under object-acyclicity, the constraint graph maintained by IncSFS is identical to the graph obtained by transforming the updated program, and the points-to sets produced by IncSFS form a fixed point of this updated constraint system.
\end{theorem}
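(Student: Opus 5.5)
We prove the two claims of Theorem~\ref{thm:constraint-consistency} separately. Throughout, we use the termination theorem of Section~\ref{sec:supp-termination}: under object-acyclicity there is a final round $K$ after which $\mathcal{E}^-=\mathcal{E}^+=\emptyset$. Let $\mathcal{G}_K$ be the maintained graph and $\pts_K$ the final points-to map. Let $\mathcal{G}^\star$ be the graph obtained by applying the rules of Figure~\ref{fig:TransVFGToCG} to the updated VFG $G^\prime$, using the points-to sets $\pts_K$ in the load and store side conditions.

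\textbf{Step 1: $\mathcal{G}_K=\mathcal{G}^\star$.} We argue by induction over rounds, with the invariant that, at the end of each round $k$,
\[
\mathcal{G}_k\cup\mathcal{E}_k^+\setminus\mathcal{E}_k^- \;=\; \mathrm{Trans}(G^\prime,\pts_k),
\]
that is, the pending edits exactly repair the discrepancy between the current graph and the transformation of $G^\prime$ under the current points-to sets.
\begin{itemize}
\item \emph{Base case.} $\mathcal{G}_0$ is $\mathrm{Trans}(G,\pts)$. The initialization rules of Figures~\ref{fig:deledges} and~\ref{fig:insedges} are, case by case, the edge-level difference between $\mathrm{Trans}(G,\pts)$ and $\mathrm{Trans}(G^\prime,\pts)$. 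We check this per statement type: copy, phi and field edges depend only on nodes; load edges depend on value-flow edges and $\pts(q)$; store edges depend on value-flow edges and on the cardinality case of $\pts(p)$.
\item \emph{Inductive step.} Applying pending edits gives $\mathrm{Trans}(G^\prime,\pts_k)$. Propagation then changes $\pts$ by $\Delta^\pm$. We show that \textsc{P-Load} and \textsc{P-Store} emit exactly the edges whose membership in $\mathrm{Trans}$ flips under this change. For loads, these are the edges indexed by $o\in\Delta^\pm_q$. For stores, they are the target edges for $o\in\Delta^\pm_p$, plus the bypass edges whose status changes with the transition among the cases $|\pts(p)|=0$, $=1$, $>1$. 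This transition enumeration is exactly what Lemma~\ref{lemma:supp-mon} catalogs, so we reuse it.
\end{itemize}
At round $K$ the pending sets are empty, so the invariant gives $\mathcal{G}_K=\mathcal{G}^\star$.

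\textbf{Step 2: $\pts_K$ is a fixed point of $\mathcal{G}_K$.} We must show $\pts(q)\subseteq\pts(p)$ for every edge $p\leftarrow q\in\mathcal{G}_K$, and that alloc facts are present. The argument follows the Consistency proof sketch, now made precise by a per-node invariant. After a node $p$ is processed in topological order, every edge $p\leftarrow q$ satisfies $\pts(q)\subseteq\pts(p)$, for three reasons:
\begin{itemize}
\item newly inserted edges seed $\Delta^+_p\supseteq\pts(q)$;
\item \textsc{Filter} removes from $\Delta^-_p$ every object still supplied by some predecessor;
\item \textsc{Propagate} forwards $q$'s own later deltas to $p$ before $p$ is processed.
\end{itemize}
Two points need extra care.

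\emph{SCCs.} Nodes within one SCC share the representative $R$, so inclusion inside an SCC holds trivially.

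\emph{Interleaving.} The step $\Delta^+_p\setminus\Delta^-_p$ could discard an object $o\in\pts(q)$ that $p$ actually needs. We exclude this by showing that an $o$ lying in $\Delta^-_p$ after filtering has no remaining predecessor supplying it, so in particular $o\notin\pts(q)$ for every $q$ with $p\leftarrow q$. Hence the subtraction never removes a needed object.

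\textbf{Main obstacle.} The hardest part is the inductive step of Step 1 in combination with Step 2. An edge inserted in round $k$ seeds $\Delta^+$ using $\pts(q)$ at a time when $\pts(q)$ may still shrink later in the same round, and a pending deletion may later be cancelled. To handle this, we would treat one round as one topological pass and show that all deltas reaching a node are final for that round: topological order guarantees that upstream nodes are already settled. Edits produced by load/store side effects are deferred to the next round, where the invariant is re-established. Termination then caps the number of rounds.
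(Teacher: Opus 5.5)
Your proposal is sound at the paper's level of rigor, but it is organized differently from the paper's proof.

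\textbf{How the two routes differ.} The paper goes statement type by statement type (alloc; copy/phi/field; store; load). Each time it assumes the final state violates that statement's transfer function. It then traces backwards how the offending object could have entered or left $\Delta^\pm$, with reasoning such as ``$o$ has remained continuously in $\pts(q)$, so by the transform rule the edge already exists''. You instead split the theorem into two parts. The structural part you prove by a forward round invariant $\mathcal{G}_k\cup\mathcal{E}^+_k\setminus\mathcal{E}^-_k=\mathrm{Trans}(G^\prime,\pts_k)$. The semantic part you prove by a per-node inclusion invariant along the topological pass. The paper's route is shorter and maps one-to-one onto the SFS data-flow equations.

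\textbf{What your route makes explicit.} Your route surfaces three things the paper leaves implicit.
\begin{itemize}
\item Object-acyclicity enters only through termination, via the existence of a final round $K$. The paper's proof never visibly uses the hypothesis.
\item Your base case checks that the GraphDiff initialization rules produce exactly $\mathrm{Trans}(G,\pts)\,\triangle\,\mathrm{Trans}(G^\prime,\pts)$. The paper's ``by construction / by the transform rules'' steps take this for granted, including for newly inserted stores.
\item You treat the $\Delta^+_p\setminus\Delta^-_p$ step explicitly. The paper's copy case only mentions $o$ leaving $\Delta^+_p$ via $\setminus\pts(p)$. Your observation closes that case: after the first \textsc{Filter} step, $\Delta^-_p$ is disjoint from every current predecessor's points-to set.
\end{itemize}

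\textbf{Two adjustments to your write-up.}
\begin{itemize}
\item Lemma~\ref{lemma:supp-mon} covers only pure shrinkage or pure expansion. A single interleaved round can move $\pts(p)$ from $\{o_1\}$ to $\{o_2\}$, which that lemma does not cover. Derive flip-exactness directly from \textsc{P-Store} instead: it recomputes the update case from the post-filter $|\pts(p)|$ against the current $\mathcal{E}$. Mutual exclusivity of the filtered $\Delta^\pm_p$ then gives $\mathcal{E}^+\cap\mathcal{E}^-=\emptyset$, so the order in which the edits are applied does not matter.
\item State the Step~2 invariant as a round-end invariant over all edges of the current graph, not only edges into nodes processed in that round. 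Edges into nodes untouched in round $K$ are covered only by the previous round's invariant. Edges inserted at the start of the round are covered by \textsc{P-InsCopy} seeding.
\end{itemize}
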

\begin{proof}
	For each statement type in the updated program, we show that the points-to sets and constraint graph maintained by IncSFS satisfy the data-flow functions of full-sparse flow-sensitive analysis.
	\begin{itemize}
		
		\item \textbf{Alloc statement.} Consider a copy statement $\ell:p=alloc$. We show that $o\in \pts(p)$. Assume $o\not\in \pts(p)$, which is invalid since transform rule $o\rightarrow \pts(p)$.
		
		\item \textbf{Copy statement.} Consider a copy statement $\ell:p=q$ (the same argument applies to field statements and phi statements). We show that $\forall o\in\pts(q), o\in\pts(p)$. Assume, for contradiction, that $\exists o\in\pts(q)\wedge o\notin\pts(p)$. Then either $o\in\Delta^-_p$ or $o\notin\Delta^+_p$.
		
		If $o\in\Delta^-_p$, since there exists a copy edge $q\rightarrow p$, regardless of whether this edge is newly inserted or already exists, the propagation rule applies:
		\[
		\Delta^-_p \leftarrow \Delta^-_p \setminus \pts(q).
		\]
		Because $o\in\pts(q)$, $o$ must be removed from $\Delta^-_p$. Therefore this case is invalid.
		
		If $o\notin\Delta^+_p$, we further distinguish two cases. First, if initially $o\in\Delta^+_p$, then by the filtering rule
		\[
		\Delta^+_p \leftarrow \Delta^+_p \setminus \pts(p),
		\]
		$o$ can only be removed when $o\in\pts(p)$, contradiction. Otherwise, initially $o\notin\Delta^+_p$. Then either $o\in\Delta^+_q$ but $o\notin\Delta^+_p$, which is impossible according to Rule \textsc{Propagate} $\Delta^+_p\leftarrow \Delta^+_p\cup\Delta^+_q$, or the edge $q\rightarrow p$ is newly inserted, in which case Rule \textsc{insEdge} requires $\Delta^+_p\leftarrow\Delta^+_p\cup\pts(q)$. These two cases are also invalid. Therefore, $\forall o\in\pts(q), o\in\pts(p)$.
		
		\item \textbf{Store statement.} Consider a store statement $\ell:*p=q$. We first show that $\forall o\in\pts(p), q\rightarrow o_\ell$. Assume, for contradiction, that there exists $o\in\pts(p)$ such that the edge $q\rightarrow o_\ell$ does not exist. According to Rule \textsc{P-Store}, this can only happen if either $o\in\Delta^-_p$ or $o\notin\Delta^+_p$. However, by construction, $\pts(p)\cap\Delta^-_p=\emptyset$, thus $o\in\Delta^-_p$ is impossible.
		
		Therefore, only the case $o\notin\Delta^+_p$ remains. This implies that the edge $p\rightarrow o_\ell$ did not exist before and $o\in\pts(p)$ all the time. The statement must previously correspond either to a weak update or a strong update. In both cases, according to the transform rules, the edge $q\rightarrow o_\ell$ must necessarily exist. This is invalid.
		
		We now consider bypass edges. For weak updates, we require $\forall \ell'\xrightarrow{o}\ell, o_{\ell'}\rightarrow o_\ell$. Assume such an edge does not exist. According to Rule \textsc{P-Store}, the edge will be inserted through
		$
		\mathcal{E}^+ \leftarrow \mathcal{E}^+ \cup \{o_{\ell'}\rightarrow o_\ell\},
		$
		This case is invalid.
		
		For strong updates, we require $\forall \ell'\xrightarrow{o}\ell, o\notin\pts(p)\Rightarrow o_{\ell'}\rightarrow o_\ell$. Assume such an edge does not exist. Then Rule \textsc{P-Store} will insert the edge. This case is invalid. Conversely, if $o\in\pts(p)$, then the bypass edge must not exist. Assume it exists. Then Rule \textsc{P-Store} deletes the edge. This case is invalid.
		
		For no-op updates, all bypass edges must be absent, i.e., $\forall \ell'\xrightarrow{o}\ell, o_{\ell'}\not\rightarrow o_\ell$. Assume such an edge exists. Then Rule \textsc{P-Store} will delete the edge. This case is invalid.
		
		\item \textbf{Load statement.} Consider a load statement $\ell:p=*q$. We show that $\forall \ell'\xrightarrow{o}\ell, o\in\pts(q)\Rightarrow o_{\ell'}\rightarrow o_\ell$. Assume, for contradiction, that $o\in\pts(q)$ but the edge $o_{\ell'}\rightarrow o_\ell$ does not exist. Then either $o\in\Delta^-_q$ or $o\notin\Delta^+_q$.
		
		If $o\in\Delta^-_q$, this is impossible because deleted objects cannot simultaneously remain in $\pts(q)$. Otherwise, if $o\notin\Delta^+_q$, then $o$ has remained continuously in $\pts(q)$ throughout the incremental analysis. According to transform rule \textsc{Load}, the edge $o_{\ell'}\rightarrow o_\ell$ must therefore already exist, contradiction.
		
	\end{itemize}
\end{proof}

\begin{lemma}[Incoming Neighbors Property]
	\label{lemma:incoming-neighbors}
	As established by Liu et al.~\cite{10.1145/3293606}, consider an acyclic PAG and a pointer node $q$ such that $o\in\pts(q)$. If $q$ has an incoming neighbor $r$, i.e., there exists an edge $r\rightarrow q$, and $o\in\pts(r)$, then there exists a path from $o$ to $r$ that does not pass through $q$.
\end{lemma}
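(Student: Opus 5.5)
We would prove the property by combining a \emph{witness-path} characterisation of the least fixed point with the acyclicity of the graph. Throughout, we treat the PAG (after SCC collapsing, i.e., the graph on representatives) as a fixed directed graph whose copy edges are those present at the current fixed point. This includes the load/store-induced edges determined by the current points-to sets. We read ``a path from $o$ to $r$'' as a path starting at the node that introduces $o$, namely the allocation site, or the renamed object node $o_\ell$ that seeds $o$ via the \textsc{Alloc} rule, and ending at $r$.

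\textbf{Step 1 (witness paths).} First we would show that, at the least fixed point of the constraint system on the acyclic graph, $o\in\pts(r)$ implies the existence of a directed path from the origin of $o$ to $r$. The natural route is induction on the Kleene iteration that computes the least fixed point. Initially only \textsc{Alloc} facts $o\in\pts(p)$ hold, and these are witnessed by the trivial path. In the inductive step, an object enters $\pts(r)$ only by propagation along some edge $s\rightarrow r$ with $o\in\pts(s)$ already established. The induction hypothesis gives a path to $s$, and we extend it by the edge $s\rightarrow r$. Field edges are handled the same way, with the object $o.f$ witnessed through the path to $q$ followed by the field edge. Because the fixed point is least, no object enters a set without such a derivation, so this induction covers every fact.

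\textbf{Step 2 (avoiding $q$).} Fix the witness path $\pi$ from the origin of $o$ to $r$ obtained in Step 1. Suppose $\pi$ passed through $q$. Its suffix would then be a path $q\rightsquigarrow r$, and together with the edge $r\rightarrow q$ this closes a directed cycle through $q$ and $r$. That contradicts the assumption that the PAG is acyclic; in our setting this means SCCs have been collapsed into representatives by SCCDetection, and $R(r)\neq R(q)$. Hence $\pi$ avoids $q$. The degenerate case in which $q$ is itself the origin node of $o$ is handled by the same argument: $q\rightarrow\cdots\rightarrow r\rightarrow q$ would again be a cycle.

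\textbf{Main obstacle.} Step 2 is immediate; the real work is Step 1 in the flow-sensitive setting. The edge set is not static: load/store-induced edges and strong-update bypass edges depend on points-to sets. To handle this, we would run the induction on the \emph{final} edge set. The least fixed point of the combined system of sets and edges is also a fixed point of the copy-propagation system over the final graph, and it is the least such fixed point. Every derivation step therefore uses an edge that exists in the final graph. We would justify this using the monotonicity of edge generation (Lemma~\ref{lemma:supp-mon}), which guarantees that an edge once justified by the final sets remains justified, so witness paths built during the induction remain valid in the final graph.
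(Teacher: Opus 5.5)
Your proof is correct. The paper gives no argument of its own here; it simply imports the lemma from Liu et al., and your two steps are the standard reachability-plus-acyclicity reasoning behind that citation. Step 2 in fact shows more than is asked: in a DAG containing $r\rightarrow q$, \emph{no} walk ending at $r$ visits $q$. So the lemma's whole content is that some derivation path to $r$ exists.

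What you add is a justification of that path existence for this paper's constraint graph, whose load, store-target and bypass edges depend on the points-to sets. Your reduction is sound: the least fixed point of the combined system equals the least fixed point of plain copy propagation over the final edge set, because every Kleene iterate lies below the final sets and edge generation is monotone. This is exactly where the no-op convention for $|\pts(p)|=0$ is needed. If an empty store were treated as a pass-through, growing $\pts(p)$ from $\emptyset$ to $\{o\}$ would delete a bypass edge and monotonicity would fail.

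A lighter alternative is induction along a topological order of the final acyclic graph. That needs only that each fact is seeded at its node or inherited from an in-neighbor (the equations hold with equality), not leastness or monotonicity. Your route is heavier, but it makes explicit something the bare citation leaves implicit: the lemma presupposes that $\pts$ is the solution on the \emph{final} graph.

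Two small fixes:
\begin{itemize}
\item The monotonicity step should say that every edge generated by an intermediate iterate is also generated by the final sets, rather than that final-justified edges ``remain justified''.
\item The origin of $o$ should be the variable $p$ of $p=alloc_o$, as in your base case, not the contents node $o_\ell$, which has no edge into $p$.
\end{itemize}
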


\begin{theorem}[Conditional Change Local Property]
	\label{thm:incoming-local-test}
	Under the object-acyclicity condition, suppose that an edge $p\rightarrow q$ is deleted from an acyclic PAG and all
	other edges remain unchanged. For any $o\in\pts(q)$, if $q$ has an incoming
	neighbor $r$ such that $o\in\pts(r)$, then $o$ remains in $\pts(q)$. If no
	incoming neighbor of $q$ has a points-to set containing $o$, then $o$ must be
	removed from $\pts(q)$.
\end{theorem}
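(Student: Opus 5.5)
I would prove the two directions separately, working throughout with the characterization of the least fixed point of the updated (acyclic, SCC-collapsed) constraint graph by derivations. Concretely, $o\in\pts(v)$ holds in the least solution iff there is a finite path in the constraint graph from the node where $o$ is introduced (an \textsc{Alloc} seed, or an object node $o_{\ell}$ fed by a store) to $v$ along which $o$ is carried. The first step is to fix this path semantics. It is legitimate because, once all other edges are frozen, the system is a monotone inclusion system and its least solution is the union over paths. Load/store-induced edges are the only non-static part. Under the hypothesis that all other edges remain unchanged, and using object-acyclicity, I will treat them as fixed for the duration of this single deletion.

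\emph{Retention direction.} Suppose $o\in\pts(q)$ and some incoming neighbor $r$ (with $r\neq p$, since $p\rightarrow q$ is deleted) still has $o\in\pts(r)$. By Lemma~\ref{lemma:incoming-neighbors}, there is a path from $o$ to $r$ that avoids $q$. Acyclicity of the PAG then tells us the path does not use the edge $p\rightarrow q$ either. If it did, it would pass through $q$, since the edge ends at $q$. Hence this path survives the deletion, so $o\in\pts(r)$ still holds in the new least solution. The edge $r\rightarrow q$ is unchanged, so $o\in\pts(q)$.

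The subtle point is that the path might pass through a store- or load-induced edge whose existence itself depends on $o\in\pts(q)$, or on something derived from it. Here I invoke object-acyclicity in the same way as in the termination proof. If such an edge depended on the deleted fact, the chain of dependencies would give $o_1\leadsto_{\mathrm{pt}}o_2\leadsto_{\mathrm{pt}}\cdots\leadsto_{\mathrm{pt}}o_1$. So the supporting path is independent of the deleted edge.

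\emph{Removal direction.} Suppose no incoming neighbor $r$ of $q$ in the updated graph has $o\in\pts(r)$, with the neighbors' sets taken after their own updates. This is ensured by processing in topological order on the acyclic graph, so all predecessors of $q$ are final before $q$ is examined. In the least solution, $o$ can enter $\pts(q)$ only through an incoming edge or through a direct \textsc{Alloc} seed. I would handle the seed case by treating it as an incoming pseudo-neighbor. No incoming edge carries $o$, so $o$ does not belong to the least solution at $q$, and keeping it would violate minimality.

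\emph{Main obstacle.} I expect the hardest step to be justifying that the neighbors' sets can be treated as already final when $q$ is tested, i.e.\ that the induced edges do not feed back into predecessors of $q$. The plan is to combine topological processing on the acyclic PAG with the object-acyclicity argument: any such feedback would produce a cycle of induced deletions, which I would turn into an object points-to cycle exactly as in the termination theorem.
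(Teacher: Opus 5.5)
Your route is the paper's. The Incoming Neighbors Lemma gives a path from $o$ to $r$ avoiding $q$, and hence avoiding the deleted edge $p\rightarrow q$. Object-acyclicity is then invoked, as in the paper's Case~2, to rule out that a load/store-induced edge on this support depends on $o\in\pts(q)$. The removal direction uses the fact that removals only delete edges.

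The gap is your sentence that such a dependency ``would give $o_1\leadsto_{\mathrm{pt}}o_2\leadsto_{\mathrm{pt}}\cdots\leadsto_{\mathrm{pt}}o_1$''. That conversion is only available for target edges $a\rightarrow o_\ell$ and load edges $o_{\ell'}\rightarrow b$. There, the objects lost downstream lie in the points-to set of the object the base pointer lost. It is not available for bypass edges. Suppose $o$ leaves the base pointer $x$ of a weak-update store $\ell:*x=a$ and $\pts(x)$ shrinks to $\{o'\}$. The store becomes a strong update on a \emph{different} object $o'$ and deletes $o'_{\ell'}\rightarrow o'_{\ell}$. The objects then lost lie in $\pts(o')$, so this causal step contributes no $\leadsto_{\mathrm{pt}}$ edge out of $o$. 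The paper's Case~2 chain (``losing target $o_k$ deletes $a_k\rightarrow o_k$'') has exactly the same omission.

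Concretely, take $p=alloc_{o}$, $s=alloc_{o'}$, $\ell_0:*s=p$, then a loop containing $q=\phi(p,s,r)$, $\ell:*q=a$ with $\pts(a)=\emptyset$, and $r=*s$.
\begin{itemize}
\item Before the change, $\pts(q)=\{o,o'\}$, so $\ell$ is a weak update. The path $p\rightarrow o'_{\ell_0}\rightarrow o'_{\ell}\rightarrow r$ puts $o$ into $\pts(r)$.
\item After deleting $p\rightarrow q$, $r$ is an incoming neighbor of $q$ with $o\in\pts(r)$, so the test keeps $o$.
\item Yet SFS on the new program yields $\pts(q)=\{o'\}$, a strong update at $\ell$, and $\pts(o'_{\ell})=\pts(r)=\emptyset$.
\end{itemize}
The constraint graph is acyclic apart from loop self-edges. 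The pre-analysis still gives $\{o,o'\}$ for $q$, so the VFG is unchanged apart from the $\phi$. The only object points-to fact is $o'\leadsto_{\mathrm{pt}}o$, so object-acyclicity holds throughout. Hence the feedback you rightly flag as the main obstacle cannot be discharged by object-acyclicity.

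If you literally freeze all other edges, your first retention paragraph is already complete and object-acyclicity plays no role; that is just Liu et al.'s static change-locality. What the argument actually needs is acyclicity of an augmented graph. It adds an edge from $x$ to every $o_\ell$ for each store $\ell:*x=a$, and from $x$ to $b$ for each load $\ell:b=*x$. Then every node and edge on the supporting path is governed only by ancestors of $q$, and your Incoming-Neighbors argument goes through verbatim.
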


\begin{proof}
\begin{itemize}

	\item Case 1: If $o\in\pts(p)$ and there exists an incoming neighbor $r$ of $q$ such that the edge $r\rightarrow q$ is not deleted when $o$ is removed from $\pts(q)$. 
	By Lemma~\ref{lemma:incoming-neighbors}, $o$ reaches $r$ along a path that does not pass through $q$ and this path will not be unreachable caused by the condition above.  This path still supports $o\in\pts(q)$. Therefore, $o$ remains in $\pts(q)$ after deleting $p \rightarrow q$. Otherwise, if no incoming neighbor of $q$ has a points-to set containing $o$, then $o$ cannot reach $q$ and should therefore be removed from $\pts(q)$.
	
	\item Case 2: If $o\in\pts(p)$ and there exists an incoming neighbor $r$ of $q$ such that the edge $r\rightarrow q$ is deleted when $o$ is removed from $\pts(q)$. The following program provides an example for this case. This case violates the object-acyclicity condition. If removing $o$ must eventually delete $r\rightarrow p$. There must exist a chain $e_1\Rightarrow e_2\Rightarrow\cdots\Rightarrow e_n$, where $e_n=(r\rightarrow p)$. Each step is induced by a store $*p_k=a_k$: losing target $o_k$ deletes $a_k\rightarrow o_k$ and causes $o_k$ to lose $o_{k+1}$, which implies
	$
	o_1\in\pts(p), o_2\in\pts(o_1),\ldots, p\in\pts(o_{n-1}),
	$
	which contradicts object-acyclicity.

\end{itemize}
\end{proof}

\begin{figure}[t]
	\centering
	\begin{minipage}[t]{0.46\linewidth}
		\vspace{0pt}
		\begin{lstlisting}[
			basicstyle=\ttfamily\small,
			columns=fullflexible,
			frame=none
			]
			p  = alloc o1;
			q0 = alloc o2;
			t  = p;
			u  = p;
			q0 = p;  // deleted
			q  = q0;
			label:
			q.f = t;
			r   = u.f;
			q   = r;
			goto label;
		\end{lstlisting}
	\end{minipage}
	\hfill
	\begin{minipage}[t]{0.50\linewidth}
		\vspace{0pt}
		\centering
	\includegraphics[width=\linewidth]{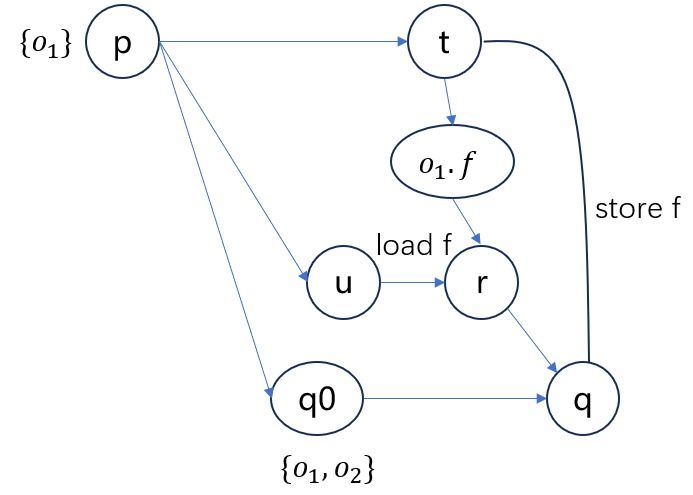}
	\end{minipage}
	
	\caption{An example program and its corresponding graph.}
	\Description{An example pointer program and its points-to dependency graph, illustrating a cyclic dependency among pointer facts.}
	\label{fig:example-program-graph}
\end{figure}

We prove for minimality according to Theorem \ref{thm:incoming-local-test}. Assume that $o\in\pts(p)$, but $o\notin\pts(q)$ for every incoming edge $q\rightarrow p$. Since both $\Delta^-$ and $\Delta^+$ are derived from predecessor nodes, the presence of $o$ in $\pts(p)$ must have originated from some predecessor $r\rightarrow p$. There are three cases: Case 1 : $o\in\Delta^-_r$ and $r\rightarrow p$ still exists; Case 2: the edge $r\rightarrow p$ is deleted and $o\in\pts(r)$; or Case 3: $q\rightarrow p$ is a newly inserted edge used to filter the removal of $o$ in $\pts(p)$ and $o\in\pts(q)$.
 By Theorem~\ref{thm:incoming-local-test}, Case~3 is impossible under object-acyclicity. For Cases~1 and~2, since no current predecessor of $p$ contains $o$, Rule \textsc{Filter-Del} ensures that $o$ is eventually removed from $\pts(p)$, contradicting the assumption that $o\in\pts(p)$. Therefore, IncSFS reaches the least fixed point.

\end{document}